\theoremstyle{plain}
\newtheorem{thm}{Theorem}[section]
\newtheorem{lem}[thm]{Lemma}
\newtheorem{cor}[thm]{Corollary}
\theoremstyle{definition}
\newtheorem{prop}[thm]{Proposition}
\newtheorem{rem}[thm]{Remark}
\newcommand{\CH}{\ensuremath{C^{{\perp}_H}}}
\newcommand{\Fqq}{\ensuremath{\mathbb{F}_{q^2}}}
\newcommand{\Fq}{\ensuremath{\mathbb{F}_{q}}}
\newcommand{\be}{\ensuremath{\beta}}
\newcommand{\gam}{\ensuremath{\gamma}}
\newcommand{\nb}{\ensuremath{\overline{n}}}
\newcommand{\B}{\ensuremath{B_{{\nb,\lambda,q^2}}^H}}
\DeclareMathOperator{\lcm}{lcm}
\DeclareMathOperator{\ord}{ord}
\DeclareMathOperator{\Hull}{Hull}
\journal{ }
\begin{document}
\begin{frontmatter}
\title{The Average Dimension of the Hermitian Hull of Constayclic Codes over Finite Fields}
 \author[label1]{Somphong Jitman} 
   \address[label1]{Department of Mathematics, Faculty of Science, Silpakorn University, Nakhon Pathom 73000, Thailand}

  \address[label2]{Department of Mathematics and Statistics, Faculty of Science, Thaksin University, Phattalung 93110, Thailand}

 \author[label2]{Ekkasit Sangwisut\footnote{Corresponding author.\\ E-mail address:ekkasit.sangwisut@gmail.com (Ekkasit Sangwisut).}}



\begin{abstract}  
    The hulls of  linear  and cyclic codes  have been extensively studied due to their  wide applications. 
    The dimensions and  average dimension of the Euclidean hull of linear  and cyclic codes have been well-studied. 
   In this paper, the average dimension of the Hermitian hull  of constacyclic codes of length $n$ over a finite field $\mathbb{F}_{q^2}$  is  determined  together with some upper and lower bounds. It turns out that   either the average dimension of the Hermitian hull  of constacyclic codes of length $n$ over $\mathbb{F}_{q^2}$ is  zero or  it  grows the same rate as $n$.   Comparison  to the average dimension of  the  Euclidean  hull of cyclic codes  is  discussed as well.
\end{abstract}

\begin{keyword} 
	average dimension \sep constayclic codes \sep Hermitian hull \sep polynomials
    
               \MSC[2010]  94B15 \sep 94B05  \sep  12C05
\end{keyword}
\end{frontmatter}

\thispagestyle{empty}
\section{INTRODUCTION}
The (Euclidean) hull of a  linear code has been introduced to classify finite projective planes in \cite{Assmus90}. It is defined to be the intersection of a linear code and its Euclidean dual.  Later, it turns out that the  hulls of linear codes play a vital  role in determining the complexity of   algorithms for  checking permutation equivalence of two linear codes  in \cite{Leon91, Sendrier997, Sendrier00}. Subsequently, it has been shown that  the hull is an indicator for  the complexity of   algorithms       for   computing   the automorphism group of a linear code in \cite{Leon82, Sendrier01}.  Precisely, most of the algorithms do not work if the size of the hull  is large.  Recently, the hulls of linear codes have been applied in constructing  good entanglement-assisted quantum error correcting codes in \cite{GJG2016}.
Due to these  wide applications, the hulls of linear codes  and their properties  have been extensively studied.  The number of linear codes of length $n$ over $\Fq$ whose hulls have  a common dimension and the average dimension of the hull of linear codes were studied in \cite{Sendrier97}. It has been shown that the average dimension of the hull of linear codes   is asymptotically a positive constant dependent of $q$.

Constacyclic codes constitute an  important class of linear codes  due to their nice algebraic structures and various applications in engineering \cite{Bakshi2012} and \cite{Chen2012}.  Especially,  this family of codes contains   a class of well-studied cyclic codes. In \cite{Sangwisut}, the number of cyclic codes of length $n$ over $\Fq$ having hull of a fixed dimension has been determined together with the  dimensions of the hulls of cyclic codes of length $n$ over $\Fq$.
The average dimension of the hull of cyclic codes   with respect to the Euclidean and Hermitian inner products have been  investigated in \cite{Skersys03} and \cite{Sangwisut16}, respectively. It has been shown that either  the  average dimension of the hull of such codes is zero or it grows at the same rate with $n$.  In \cite{Sangwisutccqtc}, the dimensions of the Hermitian hulls of {constacyclic} codes of length $n$ over $\Fqq$ have been determined. However, in the literature, the average dimension of the Hermitian hull of constacyclic codes has not been studied. Therefore, it is of natural interest to study the average dimension of the Hermitian hull of constacyclic codes.

 In this paper,    we focus on the average dimension of the Hermitian hull of constacyclic codes.   Employ the techniques modified from \cite{Sangwisutccqtc} and \cite{Skersys03}, a general formula for  the  average dimension of the Hermitian hull of constacyclic  codes  of length $n$ over $\Fqq$ is determined. 
Asymptotically,   either the  average dimension of the Hermitian hull of constacyclic  codes is    zero or it grows at the same rate with $n$. This result coincides with the case of the average dimension of the  Euclidean hull of cyclic codes.  However,   there are interesting  differences on  lower bounds discussed in Section 6.

The  paper is organized as follows.  In Section 2, some basic knowledge concerning polynomials and codes over finite fields are recalled.  A  general formula for the average dimension of the Hermitian hull of constacyclic codes is given in Section 3. In Section 4,  some number theoretical tools are discussed  together with  a simplified formula for  the average dimension of the Hermitian hull of constacyclic codes.  Lower and upper bounds for  the average dimension of the Hermitian hull of constacyclic codes are studied in Section 5. The summary and remarks are given in Section 6.

\section{PRELIMINARIES}	

 The main focus of this paper is the average dimension of  the Hermitian hull of constacyclic codes  which is  well-defined only over a  finite field of square order (see Equation \eqref{eq:hull}).  In this section,  some basic properties of codes and polynomials over such  finite fields.
 
 For convenience, let $p$ be a prime,  $q$ be a $p$-power integer and let 
$\Fqq$ denote a finite field of order $q^2 $ and  characteristic $p$.    For  a given  positive  integer $n$, let $\Fqq^n$ denote the $\Fqq$-vector space of all vectors of length $n$ over $\Fqq$.   For $0\leq k \leq n$, a \textit{linear code} of length $n$ and dimension $k$ over $\Fqq$ is defined to be a $k$-dimensional subspace of the $\Fqq$-vector space $\Fqq^n$.  The \textit{Hermitian dual} of a linear code $C$ is defined to be the set
$$\CH:=\left\{\mathbf{u}\in\Fqq^n\mid \mathbf{u}\cdot \mathbf{v}^q=0\text{~~for all~~} \mathbf{v}\in C\right\},$$
where $\mathbf{u}\cdot \mathbf{v}^q:=\sum_{i=0}^{n-1}u_i\cdot v_i^q$ for all $\mathbf{u}=(u_0, u_1,\ldots, u_{n-1})$ and $\mathbf{v}=(v_0, v_1, \ldots, v_{n-1})$ in $\Fqq^n$. The \textit{Hermitian hull} of a linear code $C$ is defined to be \begin{align}
\label{eq:hull}
\Hull_H(C):=C\cap \CH.
\end{align}

For a fixed  nonzero element $\lambda$ in $\Fqq$, a linear code of length $n$ over $\Fqq$ is said to be {\em constacyclic}, or specifically, {\em $\lambda$-constacyclic} if  $(\lambda c_{n-1},c_1,\ldots,c_{n-2})\in C$ whenever  $(c_0,c_1,\ldots,c_{n-1})\in C$.  Every  $\lambda$-constacyclic code $C$ of length $n$ over $\Fqq$ can be   identified by  an  ideal in the principal ideal ring $\Fqq[x]/\langle x^n-\lambda\rangle$   uniquely  generated by a   monic divisor of $x^n-\lambda$. In this case, In this case,  $g(x)$  is called the \textit{generator polynomial} for  $C$ and  we have $\dim C= n-\deg(g(x)$.

For each  polynomial $f(x)=a_0+a_1x+\ldots+a_kx^k\in\Fqq[x]$ of degree $k$ and $a_0\neq 0$, the \textit{conjugate-reciprocal polynomial} of $f(x)$ is  defined to be  $f^\dagger(x):=a_0^{-q}\sum_{i=0}^k a_i^qx^{k-i}$. It is not difficult to see that $\left(f^\dagger\right)^\dagger(x)=f(x)$.   A polynomial $f(x)$ is said to be  \textit{self-conjugate-reciprocal} if $f(x)=f^\dagger(x)$. Otherwise, $f(x)$ and $f^\dagger(x)$ are called a \textit{conjugate-reciprocal polynomial pair}.

Denote by  $r$  the order of an element $\lambda$ in the multiplication group $\Fqq^*:=\Fqq\setminus\{0\}$.   In \cite[Proposition 2.3]{Yang13}, it has been shown that the Hermitian dual of a $\lambda$-constacyclic code over $\Fqq$ is again  $\lambda$-constacyclic if and only if $r |(q+1)$.  Based on this characterization, we assume  that $\lambda$ is an element in  $\Fqq$  of order $r$ such that  $r | (q+1)$ throughout the paper. 

Let $C$ be a  $\lambda$-constacyclic code of length $n$ over $\Fqq$ with the generator polynomial $g(x)$ and let  $h(x)=\frac{x^n-\lambda}{g(x)}$.  Then  $h^\dagger(x)$ is a monic divisor of $x^n-\lambda$ and it is the generator polynomial of $C^{\perp_H}$(see \cite[Lemma 2.1]{Yang13}). By \cite[Theorem 1]{Sangwisut},  the hull $\Hull_H(C)$ of $C$ is generated by the polynomial $\lcm(g(x),h^\dagger(x))$.

Let $\mathcal{C}(n,\lambda,q^2)$ denote the {set of all $\lambda$-constacyclic codes of length $n$ over $\Fqq$}. The {\em average dimension} of the Hermitian hull of $\lambda$-constacyclic codes of length $n$ over $\Fqq$  is defined to be 
$$E_H(n,\lambda,q^2):=\sum_{C\in \mathcal{C}(n,\lambda,q^2)}\frac{\dim \Hull_H(C)}{|\mathcal{C}(n,\lambda,q^2)|}.$$

  For each positive integer $n$, it  can be written in the form of $n=\nb p^\nu$, where $p\nmid \nb$ and $\nu\geq 0$.   Since the map $\alpha\mapsto \alpha^{p^\nu}$ is an automorphism on $\Fqq$, there exists an element 
  $\Lambda\in \Fqq$ such that  $\Lambda^{p^\nu}=\lambda$ and the multiplicative order of $\Lambda$ is $r$.  
Using arguments  similar to those in \cite[Section 3]{Sangwisutccqtc},  up to permutation, there exist nonnegative integers $\mathtt{s}$ and $\mathtt{t}$ such that 
\begin{align}\label{xnlam}
x^n-\lambda=\left(x^{\nb}-\Lambda\right)^{p^\nu}
=&~~\prod_{i=1}^{\mathtt{s}}\left(g_i(x)\right)^{p^{\nu}}\prod_{j=1}^{\mathtt{t}}\left(f_j(x)\right)^{p^\nu}\left(f_j^\dagger(x)\right)^{p^\nu}, 
\end{align}
where {$f_j(x)$ and $f_j^\dagger (x)$ are a conjugate-reciprocal polynomial pair} and   $g_i(x)$ is a monic irreducible self-conjugate-reciprocal polynomial  for all $1\leq i\leq \mathtt{s}$ and $1\leq j\leq \mathtt{t}$. 

Based on   the factorization in Equation \eqref{xnlam},    the  generator polynomial of  a  $\lambda$-constacyclic code  $C$ of length $n$ over $\Fqq$   can be viewed  of the form \[\displaystyle g(x)=\prod_{i=1}^{\mathtt{s}}g_i(x)^{u_i}\prod_{j=1}^{\mathtt{t}}f_j(x)^{z_j}\left(f^\dagger_j(x)\right)^{w_j},\]
where $0\leq u_i, z_j, w_j\leq p^{\nu}$. It follows that the  generator polynomial of $\CH$ is  $$\displaystyle h^\dagger(x)=\prod_{i=1}^{\mathtt{s}}g_i(x)^{p^{\nu}-u_i}\prod_{j=1}^{\mathtt{t}}f_j(x)^{p^{\nu}-w_j}\left(f^\dagger_j(x)\right)^{p^{\nu}-z_j},$$ and hence, the generator polynomial of  $\Hull_H(C)$ is 
\begin{align}\label{lcm} 
\lcm(g(x), h^\dagger(x))=\prod_{i=1}^{\mathtt{s}}g_i(x)^{\max\{u_i,~ p^{\nu}-u_i\}}\prod_{j=1}^{\mathtt{t}}f_j(x)^{\max\{z_j,~ p^{\nu}-w_j\}}\left(f^\dagger_j(x)\right)^{\max\{w_j. p^{\nu}-z_j\}}.\end{align}
It follows that \begin{align} \label{eq:dimhull} \dim Hull_H(C)
=&~ n-\sum_{i=1}^{\mathtt{s}}\deg(g_i(x))\left(\max\{u_i,p^\nu-u_i\}\right) \notag \\
&~ -\sum_{j=1}^{\mathtt{t}}\deg(f_j(x))\left(\max\{z_j,p^\nu-w_j\}+\max\{w_j,p^\nu-z_j\}\right)
\end{align}

\section{The Average Dimension     $
    E_H(n,\lambda, q^2)$}
In this section, a general formula for the average dimension  $
E_H(n,\lambda, q^2)$  of the Hermitian hull of $\lambda $-constacyclic codes of length $n$ over $\Fqq$  is given together with some  upper bounds.

Assume that $x^{\overline{n}}-\Lambda$ has the  factorization  in the form of Equation \eqref{xnlam} and  let   $B_{{\nb,\lambda,q^2}}^H:=\sum\limits_{i=1}^\mathtt{s} \deg(g_i(x))$. 
The formula for the average dimension of the Hermitian hull of constacyclic codes can be   determined using the expectation $E(\,\cdot\,) $ in  Probability Theory   as follows.
\begin{lem}\label{tu}
	Let $p$ be a prime and let $\nu$ be a nonnegative integer. Let $0\leq u, z, w\leq p^\nu$. Then the following statements hold.
	\begin{enumerate}
		\item $E(\max\{u,p^\nu-u\})=\frac{3p^\nu+1}{4}-\frac{\delta_{p^\nu}}{4(p^\nu+1)}, $
        where $\delta_{p^\nu}=1$ if $p^\nu$ is even, and $\delta_{p^\nu}=0$ otherwise.
		\item $E(\max\{z,p^\nu-w\})=\frac{p^\nu(4p^\nu+5)}{6(p^\nu+1)}$.
	\end{enumerate}
\end{lem}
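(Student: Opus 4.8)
The plan is to regard $u$, $z$, and $w$ as independent random variables, each uniformly distributed on the set $\{0,1,\ldots,p^\nu\}$ of $p^\nu+1$ equally likely values, so that $E(\,\cdot\,)$ is the associated expectation; for a function $\phi$ of a single such variable this means $E(\phi(u))=\frac{1}{p^\nu+1}\sum_{u=0}^{p^\nu}\phi(u)$, and analogously for a function of the pair $(z,w)$ under the product uniform distribution. Both parts then reduce to evaluating an explicit finite sum. Throughout I write $N:=p^\nu$ to lighten notation.

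For (1), the key is the elementary identity $\max\{a,b\}=\tfrac{1}{2}\bigl(a+b+|a-b|\bigr)$, which with $a=u$ and $b=N-u$ gives $\max\{u,N-u\}=\tfrac{1}{2}\bigl(N+|2u-N|\bigr)$. Hence $E(\max\{u,N-u\})=\tfrac{N}{2}+\tfrac{1}{2}E(|2u-N|)$, and it remains to compute $E(|2u-N|)=\frac{1}{N+1}\sum_{u=0}^{N}|2u-N|$. I would evaluate this sum by splitting on the parity of $N$: when $N$ is odd the values $|2u-N|$ run through each odd integer in $\{1,3,\ldots,N\}$ exactly twice, giving a sum of $2\bigl(\tfrac{N+1}{2}\bigr)^2$; when $N$ is even they run through $0$ once and each of $2,4,\ldots,N$ twice, giving a sum of $\tfrac{1}{2}N(N+2)$. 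Substituting back and simplifying should yield $\tfrac{3N+1}{4}$ in the odd case and $\tfrac{3N+1}{4}-\tfrac{1}{4(N+1)}$ in the even case, which is exactly the claimed formula with the correction term $\tfrac{\delta_{p^\nu}}{4(p^\nu+1)}$ (recall $N=p^\nu$ and $\delta_{p^\nu}=1$ precisely when $N$ is even).

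For (2), I would first remove the asymmetry by the substitution $w\mapsto N-w$: since this is a bijection of $\{0,\ldots,N\}$ onto itself, the variable $w':=N-w$ is again uniform, and $\max\{z,N-w\}=\max\{z,w'\}$, so $E(\max\{z,N-w\})=E(\max\{z,w'\})$ with $z,w'$ independent and uniform. The latter expectation I would compute by grouping pairs according to the value of the maximum: for each $k\in\{0,\ldots,N\}$ there are exactly $(k+1)^2-k^2=2k+1$ pairs $(z,w')$ with $\max\{z,w'\}=k$, whence $\sum_{z,w'}\max\{z,w'\}=\sum_{k=0}^{N}k(2k+1)$. Evaluating this via the standard formulas for $\sum k$ and $\sum k^2$ and dividing by $(N+1)^2$ gives $\tfrac{N(4N+5)}{6(N+1)}$, as required.

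The only genuinely delicate point is the parity case analysis in (1): the even/odd distinction in $\sum_{u=0}^{N}|2u-N|$ is precisely what produces the indicator term $\delta_{p^\nu}$, so care is needed to track the single middle term that is present when $N$ is even but absent when $N$ is odd. Part (2), by contrast, is routine once the substitution $w\mapsto N-w$ is made, as it collapses to a standard power-sum evaluation.
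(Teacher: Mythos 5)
Your proof is correct, and both computations check out: in part (1) the identity $\max\{u,N-u\}=\tfrac12(N+|2u-N|)$ together with the parity split gives $\sum_u|2u-N|=\tfrac{(N+1)^2}{2}$ for $N$ odd and $\tfrac{N(N+2)}{2}$ for $N$ even, which after simplification yields $\tfrac{N(3N+4)}{4(N+1)}=\tfrac{3N+1}{4}-\tfrac{1}{4(N+1)}$ in the even case and $\tfrac{3N+1}{4}$ in the odd case; in part (2) the substitution $w\mapsto N-w$ and the count of $2k+1$ pairs with maximum $k$ give $\tfrac{1}{(N+1)^2}\cdot\tfrac{N(N+1)(4N+5)}{6}=\tfrac{N(4N+5)}{6(N+1)}$. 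The paper itself gives no argument for this lemma beyond the remark that it follows ``using arguments similar to those in the proof of [Skersys, Theorem~23],'' so your write-up is strictly more informative than the paper's: the route you take --- treating the exponents as independent uniform random variables on $\{0,\dots,p^\nu\}$ and evaluating the resulting finite sums directly --- is exactly the kind of computation that citation is gesturing at, and it is the natural (probably the only reasonable) way to prove the statement. One point worth making explicit if this were to be included: the uniform-and-independent model for $u$, $z$, $w$ is not part of the lemma's hypotheses as stated but is the convention under which $E(\cdot)$ is used in the proof of Theorem~3.2, where each $\lambda$-constacyclic code corresponds bijectively to a choice of exponents and codes are drawn uniformly; stating that up front, as you do, closes a gap the paper leaves implicit.
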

\begin{proof} The statements can be obtained using arguments similar to those in  the proof of 
	\cite[Theorem 23]{Skersys03}.
\end{proof}

\begin{thm}\label{thm1}
    Let $\Fqq$ be a finite field of order $q^2$ and characteristic $p$ and let $n$ be a positive integer such that $n=\nb p^{\nu}, p \nmid \nb$ and $\nu\geq 0$.  Let $\lambda$  be an element in $\Fqq$ of order $r$ such that $r|(q+1)$. Then the average dimension of the Hermitian hull of $\lambda$-constacyclic codes of length $n$ over $\Fqq$ is
    \begin{align}\label{EH}
    E_H(n,\lambda, q^2)=n\left(\frac{1}{3}-\frac{1}{6(p^\nu+1)}\right)-B_{{\nb,\lambda,q^2}}^H\left(\frac{p^\nu+1}{12}+\frac{2-3\delta_{p^\nu}}{12(p^\nu+1)}\right). 
    \end{align}	   
\end{thm}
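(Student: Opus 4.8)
The plan is to recognize $E_H(n,\lambda,q^2)$ as an expectation. Each $\lambda$-constacyclic code corresponds bijectively to a choice of exponents $(u_i)_{i=1}^{\mathtt{s}}$, $(z_j)_{j=1}^{\mathtt{t}}$, $(w_j)_{j=1}^{\mathtt{t}}$ with every exponent ranging independently over $\{0,1,\dots,p^\nu\}$, so $|\mathcal{C}(n,\lambda,q^2)|=(p^\nu+1)^{\mathtt{s}+2\mathtt{t}}$ and the normalized sum defining $E_H$ is exactly the expected value of $\dim\Hull_H(C)$ when all exponents are uniform and mutually independent. First I would apply linearity of expectation to the formula in Equation \eqref{eq:dimhull}, which replaces each $\max$-term by the corresponding expectation and yields
\begin{align*}
E_H(n,\lambda,q^2)
&= n
- \sum_{i=1}^{\mathtt{s}}\deg(g_i(x))\,E(\max\{u,p^\nu-u\}) \\
&\quad - \sum_{j=1}^{\mathtt{t}}\deg(f_j(x))\bigl(E(\max\{z,p^\nu-w\})+E(\max\{w,p^\nu-z\})\bigr).
\end{align*}

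Next I would simplify the $j$-sum. Since $(z,w)$ and $(w,z)$ have the same joint distribution, $E(\max\{w,p^\nu-z\})=E(\max\{z,p^\nu-w\})$, so each summand contributes twice the single expectation evaluated in Lemma~\ref{tu}(2). To eliminate the sum over $j$ I would invoke the degree identity coming from Equation \eqref{xnlam}: comparing degrees in $x^{\nb}-\Lambda=\prod_{i=1}^{\mathtt{s}}g_i(x)\prod_{j=1}^{\mathtt{t}}f_j(x)f_j^\dagger(x)$ and using $\deg(f_j(x))=\deg(f_j^\dagger(x))$ gives $\B+2\sum_{j=1}^{\mathtt{t}}\deg(f_j(x))=\nb$, hence $2\sum_{j=1}^{\mathtt{t}}\deg(f_j(x))=\nb-\B$. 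The $i$-sum collapses to $\B$ by definition of $\B$. Substituting the two expectations from Lemma~\ref{tu} then leaves
\begin{align*}
E_H(n,\lambda,q^2)
= n
- \B\Bigl(\tfrac{3p^\nu+1}{4}-\tfrac{\delta_{p^\nu}}{4(p^\nu+1)}\Bigr)
- (\nb-\B)\,\tfrac{p^\nu(4p^\nu+5)}{6(p^\nu+1)}.
\end{align*}

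The final step is pure bookkeeping: substitute $\nb=n/p^\nu$, collect the coefficient of $n$, and collect the coefficient of $\B$ over the common denominator $12(p^\nu+1)$. The coefficient of $n$ should reduce to $\frac{2p^\nu+1}{6(p^\nu+1)}=\frac13-\frac{1}{6(p^\nu+1)}$, and the coefficient of $\B$ should reduce to $-\frac{p^{2\nu}+2p^\nu+3-3\delta_{p^\nu}}{12(p^\nu+1)}$, which regroups as $-\bigl(\frac{p^\nu+1}{12}+\frac{2-3\delta_{p^\nu}}{12(p^\nu+1)}\bigr)$ using $(p^\nu+1)^2=p^{2\nu}+2p^\nu+1$, matching Equation \eqref{EH}. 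The main obstacle is not conceptual but organizational: keeping the $\delta_{p^\nu}$ contributions straight and verifying the algebraic collapse to the stated closed form without sign errors.
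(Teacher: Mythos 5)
Your proposal is correct and follows essentially the same route as the paper: interpret $E_H(n,\lambda,q^2)$ as the expectation of $\dim\Hull_H(C)$ over independent uniform exponents, apply linearity with Lemma~\ref{tu}, use the degree identity $2\sum_{j}\deg(f_j(x))=\nb-\B$, and collect coefficients. Your algebraic endpoints (coefficient $\frac{2p^\nu+1}{6(p^\nu+1)}$ for $n$ and $-\frac{p^{2\nu}+2p^\nu+3-3\delta_{p^\nu}}{12(p^\nu+1)}$ for $\B$) both check out against Equation~\eqref{EH}.
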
	
\begin{proof}

 Let $Y$ be  the random variable  of the dimension  $\dim Hull_{H}(C)$,  where $C$  is chosen randomly from $\mathcal{C}(n,\lambda,q^2)$ with uniform probability. By Lemma~\ref{tu}, Equation \eqref{eq:dimhull}, and  arguments similar to those in the proof of \cite[Theorem 3.2]{Sangwisut16}, we obtain
\begin{align*}
E_H(n,\lambda, q^2)=&E(Y)\\
=&n-\sum_{i=1}^s\deg(g_i(x))E\left(\max\{u_i,p^\nu-u_i\}\right)\\
&~ -\sum_{j=1}^t\deg(f_j(x))E\left(\max\{z_j,p^\nu-w_j\}+\max\{w_j,p^\nu-z_j\}\right)\\
=&n- \sum_{i=1}^s\deg(g_i(x)) \left(\frac{3p^\nu+1}{4}-\frac{\delta_{p^\nu}}{4(p^\nu+1)}\right)\\
&~ - \sum_{j=1}^t\deg(f_j(x)) \frac{2p^\nu(4p^\nu+5)}{6(p^\nu+1)}\\
=&n-\left(\frac{3p^\nu+1}{4}-\frac{\delta_{p^\nu}}{4(p^\nu+1)}\right)B_{{\nb,\lambda,q^2}}^H-\frac{p^\nu(4p^\nu+5)}{6(p^\nu+1)}\left(\nb-B_{{\nb,\lambda,q^2}}^H\right)\\
=&n\left(\frac{1}{3}-\frac{1}{6(p^\nu+1)}\right)-B_{{\nb,\lambda,q^2}}^H\left(\frac{p^\nu+1}{12}+\frac{2-3\delta_{p^\nu}}{12(p^\nu+1)}\right).
\end{align*}

The proof is  therefore completed.
\end{proof}

The following corollary is straightforward from Theorem~\ref{thm1}. 

\begin{cor}\label{coreq}  
    Assume the notations as in Theorem~\ref{thm1}. Then the  following statements hold.
    \begin{enumerate}
        \item $E_H(n,\lambda,q^2)<\frac{n}{3}$.
        \item $E_H(\nb,\lambda,q^2)=\frac{\nb-B_{{\nb,\lambda,q^2}}^H}{4}$.
        \item $E_H(\nb,\lambda,q^2){<}\frac{\nb}{4}$. 
    \end{enumerate}
\end{cor}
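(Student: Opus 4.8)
The plan is to read off all three assertions directly from the closed form \eqref{EH} in Theorem~\ref{thm1}, since each amounts to substituting the right parameters and inspecting the sign of the two coefficients appearing there; the genuine content lies only in the strict inequality of item~(3).

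For item~(1) I would first observe that the coefficient of $\B$ in \eqref{EH}, namely $\frac{p^\nu+1}{12}+\frac{2-3\delta_{p^\nu}}{12(p^\nu+1)}$, is positive: if $\delta_{p^\nu}=0$ both summands are positive, while if $\delta_{p^\nu}=1$ (so that $p=2$ and $p^\nu\geq 2$) it collapses to $\frac{(p^\nu+1)^2-1}{12(p^\nu+1)}>0$. Since $\B=\sum_{i=1}^{\mathtt{s}}\deg(g_i(x))\geq 0$, the quantity subtracted in \eqref{EH} is nonpositive, whence $E_H(n,\lambda,q^2)\leq n\left(\frac13-\frac{1}{6(p^\nu+1)}\right)$. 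Because $p^\nu\geq 1$ makes $\frac{1}{6(p^\nu+1)}>0$ and $n\geq 1$, the right-hand side is strictly less than $\frac{n}{3}$, giving~(1); note that the strictness here comes entirely from the coefficient of $n$ and is independent of $\B$. For item~(2) I would specialise to length $\nb$, i.e. to $\nu=0$, so that $p^\nu=1$ and $\delta_{p^\nu}=\delta_1=0$ since $1$ is odd. Substituting into \eqref{EH}, the coefficient of $\nb$ becomes $\frac13-\frac{1}{12}=\frac14$ and the coefficient of $\B$ becomes $\frac{2}{12}+\frac{2}{24}=\frac14$, which yields $E_H(\nb,\lambda,q^2)=\frac{\nb-\B}{4}$.

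Item~(3) then reduces, through~(2), to the single inequality $\B>0$; equivalently, $x^{\nb}-\lambda$ must have at least one monic irreducible self-conjugate-reciprocal factor. I would set this up through the involution induced by the conjugate-reciprocal map $\dagger$ on the irreducible factors of $x^{\nb}-\lambda$, which on roots is $\beta\mapsto\beta^{-q}$ and maps the root set to itself precisely because the standing hypothesis $r\mid q+1$ gives $\lambda^{q+1}=1$; a self-conjugate-reciprocal factor is exactly a fixed point of this involution, the remaining factors being exchanged in conjugate-reciprocal pairs as in \eqref{xnlam}. The main obstacle is that such a fixed factor need not exist a priori: one would have to guarantee a $\dagger$-invariant Frobenius orbit, for example by exhibiting a root $\beta$ with $\beta^{\nb}=\lambda$ and $\beta^{q+1}=1$, whose solvability depends on $\gcd(\nb,q+1)$ and on the order $r$ of $\lambda$. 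This is exactly the kind of question the number-theoretic machinery of Section~4 is meant to settle, and I would import it here; I would also flag that $\B\geq 1$ is indispensable for the strict form of~(3), so controlling when $\B$ vanishes is the crux of the whole statement.
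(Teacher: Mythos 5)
Your handling of items (1) and (2) is correct and is exactly the ``straightforward'' substitution into Equation \eqref{EH} that the paper intends (the paper prints no proof of this corollary): the coefficient of $\B$ in \eqref{EH} is positive in both parity cases and $\B\geq 0$, which gives (1), and specialising to $p^\nu=1$, $\delta_{p^\nu}=0$ turns both coefficients into $\tfrac14$, which gives (2).

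The problem is item (3), and you have in fact put your finger on a real defect rather than merely left a gap. You correctly reduce (3) to the single claim $\B>0$, but this claim is not always true, so no completion of your argument (in particular, the proposed $\dagger$-invariant Frobenius orbit) can exist: $x^{\nb}-\Lambda$ need not have any irreducible self-conjugate-reciprocal factor. The paper itself concedes this in Remark~\ref{B=0} (for instance $B^H_{4,2,9}=0$) and characterises exactly when $\B=0$ in Corollary~\ref{b=0}. In any such case, item (2) gives $E_H(\nb,\lambda,q^2)=\nb/4$ exactly, so the strict inequality in item (3) fails as stated; the correct assertion is $E_H(\nb,\lambda,q^2)\leq\nb/4$, with equality if and only if $\B=0$ (equivalently, $r\notin M_q$, or $r$ even with $\be(\nb)+\be(r)>\gam$). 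The strict form does hold under an added hypothesis guaranteeing $\B>0$, e.g.\ $\lambda=1$, since then $1\in\chi\cap M_q$. So your closing remark that ``controlling when $\B$ vanishes is the crux'' is exactly right --- but the resolution is that item (3) needs to be weakened to a non-strict inequality (or restricted), not that a further existence argument is missing.
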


\section{Properties of $B_{\nb,\lambda,q^2}^H$}

{ 
    
    In this section,  some properties of  $B_{\nb,\lambda,q^2}^H$ are  given as well as their applications in determining a simplified formula   for $E_H(n,\lambda,q^2)$.

Let	$M_q:=\{\ell\geq 1| \ell \text{~divides~} q^i+1 \text{~for some odd positive integer~} i\}$ and let \begin{align} \chi:=\left\{j\geq 1~\middle|~ j|\nb r \text{ and }\gcd\left(\frac{\nb r}{j},r\right)=1\right\}.
\label{eq:chi}\end{align}
For each positive integer $j$ such that $\gcd(j,q)=1$, denote by $\ord_j(q)$ the {\em multiplicative order} of $q$ modulo $j$. 

The formula for $B_{{\nb,\lambda,q^2}}^H$ can be simplify using the sets $M_q$ and $\chi$ as follows.

\begin{lem}\label{BB}
    Assume that $x^{\overline{n}}- \Lambda$ is factorized as in  Equation \eqref{xnlam}. Then
    \[B_{{\nb,\lambda,q^2}}^H= \displaystyle\sum_{j\in \chi\cap M_q}\frac{\phi(j)}{\phi(r)},\] where $\phi$ is the Euler's totient function.
\end{lem}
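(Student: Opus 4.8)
The plan is to pass from the polynomial factorization in \eqref{xnlam} to an explicit description of the roots of $x^{\nb}-\Lambda$, and then to count those roots that lie in self-conjugate-reciprocal factors. First I would record that $x^{\nb}-\Lambda$ is separable: since $p\nmid\nb$ and $r\mid(q+1)$ forces $\gcd(r,q)=1$, we have $\gcd(q,\nb r)=1$, so $x^{\nb}-\Lambda$ has $\nb$ distinct roots in a suitable extension of $\Fqq$. Put $N=\nb r$ and fix a primitive $N$-th root of unity $\gamma$. Because $\Lambda$ has order $r$ and $\gamma^{\nb}$ generates the cyclic subgroup of order $r$, I can write $\Lambda=\gamma^{\nb s}$ with $\gcd(s,r)=1$; the roots of $x^{\nb}-\Lambda$ are then precisely the $\gamma^{k}$ with $k\equiv s\pmod r$. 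I denote this set of exponents, taken modulo $N$, by $R$.

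Next I would translate the self-conjugate-reciprocal condition into a condition on the order of a root. The monic irreducible factors of $x^{\nb}-\Lambda$ over $\Fqq$ correspond to the $q^{2}$-cyclotomic cosets $\{k q^{2a}\bmod N\}$ contained in $R$, the degree of a factor being the size of its coset. A short computation from the definition shows that $f^{\dagger}$ has roots $\gamma^{-qk}$ whenever $f$ has roots $\gamma^{k}$, so a factor is self-conjugate-reciprocal exactly when its coset is stable under multiplication by $-q$, that is, $-qk\equiv k q^{2a}\pmod N$ for some $a\ge 0$. Letting $a$ vary and using $\gcd(q,N)=1$, this is equivalent to $k(q^{i}+1)\equiv 0\pmod N$ for some odd $i\ge 1$; writing $\ord(\gamma^{k})=N/\gcd(k,N)$, it becomes $\ord(\gamma^{k})\mid q^{i}+1$ for some odd $i$, i.e.\ $\ord(\gamma^{k})\in M_q$. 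Consequently $\B$, being the total degree of the self-conjugate-reciprocal factors, equals $\#\{k\in R:\ord(\gamma^{k})\in M_q\}$.

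Then I would count, for each divisor $j$ of $N$, the number of $k\in R$ with $\ord(\gamma^{k})=j$. Such a $k$ has the form $k=(N/j)m$ with $\gcd(m,j)=1$, subject to the constacyclic constraint $(N/j)m\equiv s\pmod r$. If $\gcd(N/j,r)>1$, any prime $\ell$ dividing it makes the left side vanish modulo $\ell$, forcing $\ell\mid s$ and contradicting $\gcd(s,r)=1$; hence no root has order $j$, matching $j\notin\chi$. If instead $\gcd(N/j,r)=1$, equivalently $j\in\chi$ (which also forces $r\mid j$), the constraint pins $m$ to a single invertible residue class modulo $r$, and the number of $m\in(\Z/j\Z)^{*}$ in that class is $\phi(j)/\phi(r)$. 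Summing the contributions over $j\in\chi\cap M_q$ then yields the claimed formula.

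The main obstacle I anticipate is the last counting step, specifically proving that the number of units modulo $j$ lying in a prescribed unit residue class modulo $r$ is exactly $\phi(j)/\phi(r)$. This rests on the reduction map $(\Z/j\Z)^{*}\to(\Z/r\Z)^{*}$ being a surjective group homomorphism with fibres of equal size, which I would justify through the prime-power decomposition of $j$, and on carefully reconciling the two simultaneous congruence conditions — the order condition modulo $j$ and the shift condition modulo $r$ — so that each root is counted once. A secondary point requiring care is verifying that membership in $M_q$ depends only on $\ord(\gamma^{k})$, and that the passage from ``for some $a\ge 0$'' to ``for some odd $i\ge 1$'' really captures exactly the odd exponents.
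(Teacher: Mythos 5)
Your argument is correct, but it takes a genuinely different route from the paper. The paper's proof is essentially a citation: it invokes the factorization of $x^{\nb}-\Lambda$ already established in \cite[Equation (3.11)]{Sangwisutccqtc}, which groups the monic irreducible factors by the index $j\in\chi$, records that the self-conjugate-reciprocal ones are exactly those with $j\in\chi\cap M_q$ and that there are $\gam(j)=\frac{\phi(j)}{\phi(r)\ord_j(q^2)}$ of them, each of degree $\ord_j(q^2)$, and then sums degrees to get $\sum_{j\in\chi\cap M_q}\phi(j)/\phi(r)$. You instead rederive that structural result from scratch: you parametrize the roots of $x^{\nb}-\Lambda$ as $\gamma^{k}$ with $k\equiv s\pmod r$ inside the group of $N$-th roots of unity ($N=\nb r$), identify irreducible factors with $q^{2}$-cyclotomic cosets, show a factor is self-conjugate-reciprocal precisely when $\ord(\gamma^{k})\in M_q$ via the equivalence of ``$-qk\equiv kq^{2a}$ for some $a\ge 0$'' with ``$j\mid q^{i}+1$ for some odd $i$'', and count roots of each order $j$ using the surjectivity of $(\Z/j\Z)^{*}\to(\Z/r\Z)^{*}$ (valid because $j\in\chi$ forces $r\mid j$). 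All the steps you flag as delicate do go through: the order condition does depend only on $j=\ord(\gamma^{k})$, the case $a=0$ corresponds to $i=1$ so the odd exponents are exactly captured, and the fibre count $\phi(j)/\phi(r)$ is the standard equidistribution of units. What your approach buys is self-containment — the lemma no longer rests on an external equation from a preprint — at the cost of reproving machinery the paper deliberately outsources; the paper's route is shorter and keeps the cyclotomic bookkeeping confined to the cited reference.
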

\begin{proof}
    By Equation \eqref{xnlam}, we have  \[x^{\nb}-\Lambda 
    =~\prod_{i=1}^{\mathtt{s}}g_i(x) \prod_{j=1}^{\mathtt{t}}f_j(x)f_j^\dagger(x).\]  From \cite[Equation (3.11)]{Sangwisutccqtc}, $x^{\nb}-\Lambda$ can be factored as  
    \begin{align*}
    x^{\nb}-\Lambda =  \prod_{j\in\chi\cap M_q}\prod_{i=1}^{\gam(j)}g_{ij}(x)\prod_{j\in\chi\smallsetminus M_q}\prod_{i=1}^{\be(j)}f_{ij}(x)f^\dagger_{ij}(x),
    \end{align*}
    where
    $\gam(j)=\frac {\phi(j)}{\phi(r)ord_j(q^2)}$, $\label{bet}
    \be(j)=\frac {\phi(j)}{2\phi(r)ord_j(q^2)}$,  $f_{ij}(x)$ and $f^\dagger_{ij}(x)$ are a monic irreducible conjugate-reciprocal polynomial pair of degree $ord_j(q^2)$, 	 and ~$g_{ij}(x)$~ is a monic irreducible self-conjugate-reciprocal polynomial of degree $ord_j(q^2)$.

    Altogether, it can be  concluded that 
    $$\prod_{i=1}^{\mathtt{s}}g_i(x)=\prod_{j\in\chi\cap M_q}\prod_{i=1}^{\gam(j)}g_{ij}(x).$$
    Hence,
    \begin{align*}
    B_{{\nb,\lambda,q^2}}^H&=\sum\limits_{i=1}^\mathtt{s} \deg(g_i(x))\\
    & =\sum_{j\in\chi\cap M_q}\gam(j)\deg(g_{ij}(x))\\
    &=\sum_{j\in\chi\cap M_q}\frac {\phi(j)}{\phi(r)ord_j(q^2)}\cdot ord_j(q^2)\\
    &=\sum_{j\in \chi\cap M_q}\frac{\phi(j)}{\phi(r)}
    \end{align*}
    as desired.
\end{proof}
\begin{rem}\label{B=0} From Lemma \ref{BB},  we have the following facts.
    \begin{enumerate}
        \item  If $\lambda=1$, then $B_{\nb, 1,q^2}^H$ is alway  positive since $1\in\chi\cap M_q$. 
        \item  If $\lambda\neq 1$, then  $\chi\cap M_q$ can be empty. In this case, $\B=0$. For example,   $B_{4,2,9}^H=0$  since $\chi\cap M_3=\emptyset$.
    \end{enumerate}   
\end{rem}

Recall that $\lambda$ is an element in   $\Fqq$ of  order $r$ such that  $r|(q+1)$.  Assume that the  prime factorization of $\gcd(\nb,r)$ is of the form 
$$\gcd(\nb,r)=2^{c_0}p_1^{c_1}\ldots p_s^{c_s}$$ for some $s\geq 0$,  where $p_1, p_2,\dots, p_s$ are distinct odd primes,  $c_0\geq 0$ and $c_i\geq 1$ for all $1\leq i\leq s$.  Then $\nb$ and $r$  can be factorized  in the forms of 
\begin{equation}\label{nbr}
	\nb=2^{\be(\nb)}p_1^{a_1}\ldots p_s^{a_s}\mu \text{~~and~~} r=2^{\be(r)}p_1^{b_1}\ldots p_s^{b_s}\tau,
\end{equation}
where $\be(\nb)\geq c_0$ and $\be(r)\geq c_0$  are integers, $\mu$ and $\tau$ are odd   (not necessarily prime)  integers relative prime to $p_i$ for all $1\leq i\leq s$, $a_i$ and $b_i$ are positive integers.

Based on the factorizations above,      the presentation of the  set $\chi$  can be simplified  as follows.

 \begin{lem}\label{lem4.7}
 Let $\nb$ and $ r$ be positive integers and let  $\chi$  be defined as in Equation \eqref{eq:chi}. Then \[\chi=
 \begin{cases}
\left\{r2^{\be(\nb)}\displaystyle\prod_{i=1}^s p_i^{a_i}k ~\middle|~ k| \mu\right\}& \text{  if $r$ is even,}\\
~\\
 \left\{r\displaystyle\prod_{i=1}^s p_i^{a_i}k ~\middle|~ k| 2^{\be(\nb)}\mu\right\} &\text{ if $r$ is odd.}
 \end{cases}\]
\end{lem}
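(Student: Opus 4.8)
The plan is to translate the two defining conditions on $j$, namely $j\mid \nb r$ and $\gcd(\nb r/j,\,r)=1$, into constraints on the exponents of $j$ in the prime factorization of $\nb r$, treating each prime separately. First I would record that factorization. The odd primes dividing both $\nb$ and $r$ are exactly $p_1,\dots,p_s$, so $\mu$ and $\tau$ share no prime, no prime of $\mu$ divides $r$, and no prime of $\tau$ divides $\nb$. Hence, from \eqref{nbr},
$$\nb r=2^{\be(\nb)+\be(r)}\,p_1^{a_1+b_1}\cdots p_s^{a_s+b_s}\,\mu\,\tau,$$
and the prime factors of $r$ occurring here are precisely $2$ (only when $\be(r)\geq 1$), the primes $p_1,\dots,p_s$, and the prime factors of $\tau$.

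Next I would write a generic divisor $j\mid\nb r$ as $j=2^{\al_0}p_1^{\gam_1}\cdots p_s^{\gam_s}\,m\,t$ with $m\mid\mu$, $t\mid\tau$, $0\leq\al_0\leq\be(\nb)+\be(r)$, and $0\leq\gam_i\leq a_i+b_i$, and then impose $\gcd(\nb r/j,\,r)=1$ prime by prime. Since $b_i\geq 1$, each $p_i$ divides $r$, so coprimality forces $j$ to absorb the entire $p_i$-part of $\nb r$, i.e. $\gam_i=a_i+b_i$; likewise $t$ must absorb all of $\tau$, giving $t=\tau$. The primes of $\mu$ do not divide $r$, so $m\mid\mu$ remains free. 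The only prime requiring a case split is $2$: if $r$ is even then $2\mid r$ forces $\al_0=\be(\nb)+\be(r)$, whereas if $r$ is odd then $2\nmid r$, so $\al_0$ is unconstrained in the range $0\leq\al_0\leq\be(\nb)$ (here $\be(r)=0$).

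Finally I would reassemble $j$ and factor out $r=2^{\be(r)}p_1^{b_1}\cdots p_s^{b_s}\tau$. In the even case this yields $j=r\cdot 2^{\be(\nb)}\prod_{i=1}^s p_i^{a_i}\cdot m$ with $m\mid\mu$, which is the first branch with $k=m$. In the odd case, where $\be(r)=0$, it yields $j=r\cdot\prod_{i=1}^s p_i^{a_i}\cdot 2^{\al_0}m$; since $\mu$ is odd, the factor $2^{\al_0}m$ runs over precisely the divisors $k$ of $2^{\be(\nb)}\mu$ as $\al_0$ and $m$ vary, giving the second branch. Because each condition was converted into an exact equivalence, these parametrizations are bijective and both set equalities follow.

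The only genuine obstacle I anticipate is the bookkeeping needed to guarantee that the exponents in $\nb r$ decouple cleanly: one must verify that $\mu$ and $\tau$ are coprime and that each prime of $\mu$ (respectively $\tau$) fails to divide $r$ (respectively $\nb$), which is what makes the prime-by-prime analysis valid. The prime $2$ is the one delicate point, since its role in the coprimality condition flips with the parity of $r$; every other prime is handled by a routine comparison of exponents.
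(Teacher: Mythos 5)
Your argument is correct and follows essentially the same route as the paper's proof, which simply asserts the two set descriptions $\chi=\{2^{\be(\nb)+\be(r)}\prod_{i=1}^s p_i^{a_i+b_i}\tau k \mid k\mid\mu\}$ (for $r$ even) and $\chi=\{\prod_{i=1}^s p_i^{a_i+b_i}\tau k \mid k\mid 2^{\be(\nb)}\mu\}$ (for $r$ odd) and then factors out $r$; your prime-by-prime analysis of the condition $\gcd(\nb r/j,r)=1$ is exactly the justification the paper leaves implicit.
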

\begin{proof}
Consider the following $2$ cases.     

\noindent \textbf{Case 1} $r$ is even.
We have
\begin{align}\notag
\chi=&~~\left\{j\geq 1~\middle|~ j|\nb r\text{ and }  \gcd\left(\frac{\nb r}{j},r\right)=1\right\} \notag\\
=&~~\left\{2^{\be(\nb)+\be(r)}p_1^{a_1+b_1}\ldots p_s^{a_s+b_s}\tau k ~\middle|~ k| \mu\right\} \notag \\ \label{ome}
=&~~\left\{2^{\be(\nb)+\be(r)}\prod_{i=1}^s p_i^{a_i+b_i}\tau k ~\middle|~ k| \mu\right\}\\\label{omega}
=&~~\left\{r2^{\be(\nb)}\prod_{i=1}^s p_i^{a_i}k ~\middle|~ k| \mu\right\}.
\end{align}

\noindent \textbf{Case 2} $r$ is odd.  We have 
\begin{align}\notag
\chi=&~~\left\{j|\nb r~\middle|~ \gcd\left(\frac{\nb r}{j},r\right)=1\right\} \notag \\ 
=&~~
\left\{p_1^{a_1+b_1}\ldots p_s^{a_s+b_s}\tau k ~\middle|~ k| 2^{\be(\nb)}\mu\right\}\notag \\ \label{ome7}
=&~~\left\{\prod_{i=1}^s p_i^{a_i+b_i}\tau k ~\middle|~ k| 2^{\be(\nb)}\mu\right\}\\\label{omega8}
=&~~\left\{r\prod_{i=1}^s p_i^{a_i}k ~\middle|~ k| 2^{\be(\nb)}\mu\right\}.
\end{align}
Combining the two cases, the result follows.
\end{proof}

For   integers $i\geq 0$ and  $j\geq 1$, we say that $2^i$ exactly divides $j$, denoted by $2^i || j$, if $2^i$ divides $j$ but $2^{i+1}$ does not divide $j$.

From 
\cite[Corollary 3.7]{Jitman14} and its proof,  we have the following proposition.
\begin{prop}\label{ell}
    Let $q$ be a prime power and let $\ell=2^\be\overline{\ell}$ be a positive integer such that $\overline{\ell}$ is odd and $\be\geq 0$. Let $\gam\geq 0$ be an integer such that $2^\gam||(q+1)$. If $\ell\in M_q$, then one of the following statements holds.
    \begin{enumerate}
        \item $\ell=1$ and $q$ is even.
        \item $\ell\in\{1, 2\}$ and $q$ is odd.
        \item $\ell>2$ and one of the following statements holds. 
        \begin{enumerate}
            \item $\overline{\ell}=1$ and $2\leq\be\leq\gam$.
            \item $\overline{\ell}\geq 3$, $2||\ord_{p}(q)$  for every prime $p$ dividing $\overline{\ell}$, and $0\leq\be\leq\gam$.
        \end{enumerate}
    \end{enumerate}
\end{prop}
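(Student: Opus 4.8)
The plan is to unwind the definition of $M_q$ and reduce everything to a statement about multiplicative orders. By definition $\ell\in M_q$ means $q^i\equiv -1\pmod{\ell}$ for some odd $i$; in particular $\gcd(\ell,q)=1$, so all the relevant orders are defined. The cases $\ell=1$ and $\ell=2$ I would dispose of by hand: $1\mid q^i+1$ always, while $2\mid q^i+1$ exactly when $q$ is odd, which already accounts for statements (1) and (2) and shows that $\ell=2$ cannot occur when $q$ is even. For $\ell>2$ I would write $\ell=2^{\be}\overline{\ell}$ with $\overline{\ell}$ odd and invoke the Chinese Remainder Theorem: $\ell\mid q^i+1$ if and only if $2^{\be}\mid q^i+1$ and $p^c\mid q^i+1$ for every prime power $p^c$ exactly dividing $\overline{\ell}$. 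This lets me analyse the $2$-part and the odd part of $\ell$ separately, the only coupling being that a single odd exponent $i$ must serve all factors at once.

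For the $2$-part the key observation is a valuation identity. For odd $q$ and odd $i$ the factorization $q^i+1=(q+1)\bigl(q^{i-1}-q^{i-2}+\cdots-q+1\bigr)$ has a cofactor that is a sum of $i$ odd terms, hence odd; therefore $2^{\gam}\,||\,(q^i+1)$ for every odd $i$, the same exact power that divides $q+1$. Consequently $2^{\be}\mid q^i+1$ holds (for odd $i$) precisely when $\be\le\gam$, and this imposes no further constraint on $i$. This delivers the bound $\be\le\gam$ appearing in both (3a) and (3b), with the extra $\be\ge 2$ in (3a) coming simply from $\ell=2^{\be}>2$; when $q$ is even one has $\gam=0$, forcing $\be=0$ and explaining why only odd $\ell$ survive.

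For the odd part I would treat one prime power $p^c\,||\,\overline{\ell}$ at a time. Since $(\Z/p^c\Z)^{*}$ is cyclic and $-1$ is its unique element of order $2$, the congruence $q^i\equiv -1\pmod{p^c}$ is solvable exactly when $\ord_{p^c}(q)$ is even, with solution set $i\equiv \ord_{p^c}(q)/2\pmod{\ord_{p^c}(q)}$; an \emph{odd} solution exists iff $\ord_{p^c}(q)/2$ is odd, i.e.\ $2\,||\,\ord_{p^c}(q)$. Using $\ord_{p^c}(q)=p^{j}\ord_{p}(q)$ for some $j\ge 0$ together with the oddness of $p$, this condition is equivalent to $2\,||\,\ord_{p}(q)$, matching (3b). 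To pass from these local conditions to a single global odd $i$, I would set $D=\ord_{\overline{\ell}}(q)$, the least common multiple of the orders $\ord_{p^c}(q)$ over the prime powers $p^c\,||\,\overline{\ell}$; since each factor carries exactly one power of $2$, so does their lcm, whence $2\,||\,D$ and $D/2$ is odd. A short check shows $q^{D/2}\equiv -1\pmod{p^c}$ for every $p^c$, because $D/\ord_{p^c}(q)$ is odd, so $i=D/2$ simultaneously solves all the odd-prime congruences, and by the $2$-part analysis it also clears the power of $2$. This proves sufficiency; necessity is immediate, since a given odd $i$ with $\ell\mid q^i+1$ forces $\be\le\gam$ and $2\,||\,\ord_p(q)$ for each $p\mid\overline{\ell}$ by the two local analyses above.

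The step I expect to be the main obstacle is the \emph{simultaneity} in the last paragraph: the local conditions guarantee an odd solution modulo each prime power separately, but the moduli $\ord_{p^c}(q)$ need not be coprime, so I must exhibit one common odd exponent rather than merely splice local solutions. The device $i=D/2$ resolves this, the crux being that every $\ord_{p^c}(q)$ has $2$-adic valuation exactly $1$, which forces $2\,||\,D$ and keeps each quotient $D/\ord_{p^c}(q)$ odd. The remaining care is bookkeeping at the even/odd $q$ boundary, tracked cleanly through the single parameter $\gam$ with $2^{\gam}\,||\,(q+1)$.
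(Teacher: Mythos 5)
Your argument is correct. There is, in fact, no internal proof in the paper to compare against: the proposition is imported wholesale from Corollary 3.7 of \cite{Jitman14} (the ``good integers'' preprint), so your write-up supplies a self-contained argument where the paper only cites. Your two local ingredients are exactly the standard ones behind such statements: the valuation identity $v_2(q^i+1)=v_2(q+1)=\gamma$ for odd $q$ and odd $i$ (via the alternating-sum cofactor with an odd number of odd terms), which yields $\beta\le\gamma$ and, for even $q$, forces $\beta=0$ since $\gamma=0$; and the cyclicity of $(\Z/p^c\Z)^{*}$ for odd $p$, which shows that $q^i\equiv-1\pmod{p^c}$ has an odd solution $i$ precisely when $2\,||\,\ord_{p^c}(q)$, equivalently $2\,||\,\ord_{p}(q)$. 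Since the proposition asserts only the forward implication, these two local necessities plus the trivial cases $\ell\in\{1,2\}$ already finish the job; your careful simultaneity argument via $i=D/2$ with $D=\ord_{\overline{\ell}}(q)$ (using that every $\ord_{p^c}(q)$ has $2$-adic valuation exactly $1$, so each quotient $D/\ord_{p^c}(q)$ is odd) is needed only for the converse. That converse is not part of the stated proposition, but the paper does rely on it tacitly later, for instance in the ``conversely'' halves of Lemma \ref{lem48} where specific integers are claimed to lie in $M_q$, so establishing it is a genuine bonus rather than wasted effort. The only cosmetic slip is that you label the two directions ``sufficiency'' and ``necessity'' opposite to how the proposition is phrased; the mathematics underneath is sound.
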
    
} The necessary and sufficient conditions for   $\B$ to be non-zero are given   as follows.
\begin{lem}\label{lem48}
	Let  $\gam\geq 0$ be such that $2^\gam || (q+1)$. Then one of the following statements holds.
	\begin{enumerate}
		\item  If $r$ is even, then	$\B\neq 0$ if and only if $\be(\nb)+\be(r)\leq \gam$ and $r\in M_q$.
		\item If  $r$ is odd, then	$\B\neq 0$ if and only if $r\in M_q$.
	\end{enumerate}
	
\end{lem}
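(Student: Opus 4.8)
The plan is to reduce the non-vanishing of $\B$ to a single membership test and then settle that test with Proposition~\ref{ell}. First I would apply Lemma~\ref{BB}, which expresses $\B=\sum_{j\in\chi\cap M_q}\phi(j)/\phi(r)$ as a sum of strictly positive terms. Hence $\B\neq0$ if and only if $\chi\cap M_q\neq\emptyset$, so the entire lemma becomes a statement about when $\chi$ meets $M_q$.

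Next I would locate a single decisive element of $\chi$. By Lemma~\ref{lem4.7}, writing
$$j_0=\begin{cases} r\,2^{\be(\nb)}\prod_{i=1}^{s}p_i^{a_i}, & r\text{ even},\\ r\,\prod_{i=1}^{s}p_i^{a_i}, & r\text{ odd},\end{cases}$$
one sees that $j_0$ is the least element of $\chi$ (the case $k=1$) and that every element of $\chi$ is a multiple of $j_0$. Since $M_q$ is closed under divisors (if $d\mid\ell$ and $\ell\mid q^i+1$ then $d\mid q^i+1$), I would conclude that $\chi\cap M_q\neq\emptyset$ if and only if $j_0\in M_q$: one direction is immediate because $j_0\in\chi$, and for the other, any $j\in\chi\cap M_q$ satisfies $j_0\mid j$, forcing $j_0\in M_q$. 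Thus it suffices to decide membership of $j_0$.

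The final step is to match $j_0\in M_q$ against the stated conditions via Proposition~\ref{ell}, read as a characterization (both the stated necessity and the sufficiency supplied by the proof of \cite[Corollary 3.7]{Jitman14}). The structural observation I would exploit is that $j_0$ and $r$ have exactly the same odd prime divisors, since the factors $\prod_i p_i^{a_i}$ only raise exponents of primes $p_i$ already dividing $r$; hence the condition ``$2||\ord_p(q)$ for every odd prime $p$'' of Proposition~\ref{ell}(3)(b) holds for $j_0$ precisely when it holds for $r$. When $r$ is odd, $j_0$ is odd with the same radical as $r$, so this immediately gives $j_0\in M_q \iff r\in M_q$ (the case $r=1$ being trivial since both equal $1\in M_q$), proving statement~(2). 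When $r$ is even, I would set $\beta_0=\be(\nb)+\be(r)$ for the $2$-adic valuation of $j_0$ and split according to whether the common odd part is $1$ or $\geq 3$ and whether $\beta_0=1$ or $\beta_0\geq2$; applying Proposition~\ref{ell} in each sub-case and using $\be(r)\leq\beta_0$ (so that $\beta_0\leq\gam$ already forces $\be(r)\leq\gam$) yields $j_0\in M_q \iff (\beta_0\leq\gam \text{ and } r\in M_q)$, which is statement~(1). The main obstacle is precisely this even-$r$ bookkeeping: one must track the single threshold $\gam$ against both $\be(r)$ and $\beta_0$ across the sub-cases, and must invoke Proposition~\ref{ell} in the sufficiency direction rather than merely as the stated implication.
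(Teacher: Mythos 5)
Your proposal is correct and follows essentially the same route as the paper: reduce $\B\neq 0$ to $\chi\cap M_q\neq\emptyset$ via Lemma~\ref{BB}, use the explicit description of $\chi$ from Lemma~\ref{lem4.7}, and decide membership in $M_q$ via Proposition~\ref{ell}. Your reduction to the single minimal element $j_0$ using divisor-closure of $M_q$ is a slightly tidier packaging of what the paper does (it applies Proposition~\ref{ell} to an arbitrary $j\in\chi\cap M_q$ for necessity and takes $k=1$ for sufficiency), and you are rightly explicit that the sufficiency direction of Proposition~\ref{ell} must be drawn from the proof of the cited corollary rather than its statement.
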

\begin{proof}
From Lemma~\ref{BB}, it is not difficult  to see that $\B\neq 0$ equivalent to $\chi\cap M_q\neq\emptyset$.

To prove $1)$, assume that $r$ is even.  Suppose that $\B\neq 0$. Then $\chi\cap M_q\ne \emptyset$.  Let $j$ be  an element in $\chi\cap M_q$.  By Equations \eqref{ome} and \eqref{omega},  we have $$ j=2^{\be(\nb)+\be(r)}\prod_{i=1}^s p_i^{a_i+b_i}\tau k=r2^{\be(\nb)}\prod_{i=1}^s p_i^{a_i}k$$ for some $k|\mu$.  { Since $j\in M_q$, we have $  \be(\nb)+\be(r)\leq\gam$ and $r\in M_q$ by Proposition~\ref{ell}.}

Conversely, assume that $\be(\nb)+\be(r)\leq \gam$ and $r\in M_q$. By setting $k=1$,   we have 
$$j=2^{\be(\nb)+\be(r)}\prod_{i=1}^s p_i^{a_i+b_i}\tau =r2^{\be(\nb)}\prod_{i=1}^s p_i^{a_i}\in\chi\cap M_q$$
since  $r$ is even.  

To  prove $2)$, assume that $r$ is odd. Assume that $\B\neq 0$. Then   $\chi\cap M_q\neq \emptyset$.  Let $j$ be  an element  in $\chi\cap M_q$.  Then 
\begin{align*}
j=\prod_{i=1}^sp_i^{a_i+b_i}\tau k=r\prod_{i=1}^sp_i^{a_i}k
\end{align*}
for some $k| 2^{\be(\nb)}\mu$ by Equations \eqref{ome7} and \eqref{omega8}. { Since $j\in M_q$ and $r| j$,  we have $r\in M_q$ by Proposition~\ref{ell}.}

Conversely,  assume that $r\in M_q$.  By setting $k=1$, we have
\begin{align*}
j=\prod_{i=1}^sp_i^{a_i+b_i}\tau=r\prod_{i=1}^sp_i^{a_i}\in\chi\cap M_q
\end{align*}
since  $r$ is odd. 
\end{proof}

The next lemma can be deduced form  Lemma \ref{lem48}.
\begin{cor}\label{b=0}
	Let    $\gam\geq 0$ be such that $2^\gam\parallel (q+1)$.  Then one of the following statements holds.
	\begin{enumerate}
		\item 	If  $r$ is even, then	$B_{\nb,\lambda,q^2}=0$ if and only if $\be(\nb)+\be(r)> \gam$ or $r\not\in M_q$.
		\item If  $r$ is odd, then $B_{\nb,\lambda,q^2}=0$ if and only if $r\not\in M_q$.
	\end{enumerate}
\end{cor}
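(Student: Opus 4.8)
The plan is to derive Corollary~\ref{b=0} directly as the logical contrapositive of Lemma~\ref{lem48}, since the two statements are negations of each other under the same hypothesis $2^\gam \parallel (q+1)$. The key observation is that Lemma~\ref{lem48} gives, in each parity case, an \emph{if and only if} characterization of when $\B \neq 0$; negating both sides of each biconditional immediately yields the corresponding characterization of when $\B = 0$. Because biconditionals are preserved under negation (that is, $P \iff Q$ is equivalent to $\neg P \iff \neg Q$), no new mathematical content is required.

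Concretely, first I would treat the case where $r$ is even. Lemma~\ref{lem48}(1) asserts that $\B \neq 0$ if and only if $\be(\nb)+\be(r) \leq \gam$ and $r \in M_q$. Negating the left side gives $\B = 0$, and negating the right side (a conjunction) by De~Morgan's law gives $\be(\nb)+\be(r) > \gam$ or $r \notin M_q$. This produces exactly statement (1) of the corollary. Then I would treat the case where $r$ is odd, where Lemma~\ref{lem48}(2) states $\B \neq 0$ if and only if $r \in M_q$; negating both sides yields $\B = 0$ if and only if $r \notin M_q$, which is statement (2).

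I would not expect any genuine obstacle here, as the result is a formal consequence of the preceding lemma. The only point requiring a modicum of care is the correct application of De~Morgan's law in the even case: the negation of the conjunction ``$\be(\nb)+\be(r) \leq \gam$ \textbf{and} $r \in M_q$'' must become the disjunction ``$\be(\nb)+\be(r) > \gam$ \textbf{or} $r \notin M_q$,'' with the inequality $\leq$ correctly negated to $>$. A clean write-up would simply state that the corollary follows by taking the contrapositive (equivalently, the logical negation) of each biconditional in Lemma~\ref{lem48}, optionally spelling out the De~Morgan step for the even case to make the disjunction explicit.
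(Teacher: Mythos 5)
Your proposal is correct and matches the paper exactly: the paper states this corollary without proof as a direct deduction from Lemma~\ref{lem48}, which is precisely the negation-of-both-sides-of-each-biconditional argument (with De~Morgan's law in the even case) that you describe.
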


In the case where $B_{\nb,\lambda,q^2}=0$, the average dimension $E_H(n,\lambda,q^2)$ can be  simplified  from Theorem ~\ref{thm1} and Corollary ~\ref{b=0} as follows.
{ \begin{cor}\label{cor-noB}
	Let $q$ be a power of a prime $p$ and let  $n\geq 1$. Then the following statements holds.
	\begin{enumerate}
		\item If $r$ is even, then $E_H(n,\lambda,q^2)=n\left(\frac{1}{3}-\frac{1}{6(p^\nu+1)}\right)$ if and only if $\be(\nb)+\be(r)>\gam$ or $r\not\in M_q$.
		\item If  $r$ is odd, then $E_H(n,\lambda,q^2)=n\left(\frac{1}{3}-\frac{1}{6(p^\nu+1)}\right)$ if and only if $r\not\in M_q$.
	\end{enumerate}
\end{cor}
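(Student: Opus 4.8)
The plan is to read the corollary off directly from the formula in Theorem~\ref{thm1} combined with the vanishing criterion for $\B$ established in Corollary~\ref{b=0}. Writing $K:=\frac{p^\nu+1}{12}+\frac{2-3\delta_{p^\nu}}{12(p^\nu+1)}$, Theorem~\ref{thm1} gives $E_H(n,\lambda,q^2)=n\left(\frac{1}{3}-\frac{1}{6(p^\nu+1)}\right)-\B\, K$, so the sought identity $E_H(n,\lambda,q^2)=n\left(\frac{1}{3}-\frac{1}{6(p^\nu+1)}\right)$ holds precisely when $\B\, K=0$. The whole argument therefore reduces to showing that $K$ never vanishes, after which $\B\, K=0$ collapses to $\B=0$.

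First I would verify that $K>0$ for every prime power $p^\nu$. When $p^\nu$ is odd we have $\delta_{p^\nu}=0$, so $K=\frac{p^\nu+1}{12}+\frac{2}{12(p^\nu+1)}$, which is manifestly positive. When $p^\nu$ is even we have $\delta_{p^\nu}=1$, and the correction term becomes negative; here I would clear denominators to obtain $K=\frac{(p^\nu+1)^2-1}{12(p^\nu+1)}=\frac{p^\nu(p^\nu+2)}{12(p^\nu+1)}$, which is again strictly positive since $p^\nu\geq 1$. Hence $K>0$ in all cases, and consequently $E_H(n,\lambda,q^2)=n\left(\frac{1}{3}-\frac{1}{6(p^\nu+1)}\right)$ if and only if $\B=0$.

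It then remains only to invoke Corollary~\ref{b=0}. If $r$ is even, that corollary characterizes $\B=0$ as the condition $\be(\nb)+\be(r)>\gam$ or $r\not\in M_q$, which yields statement~(1); if $r$ is odd, it characterizes $\B=0$ as $r\not\in M_q$, which yields statement~(2). Chaining these equivalences with the one established in the previous paragraph completes the proof.

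As for the main obstacle: there is essentially no deep difficulty, since the corollary is a transparent combination of two earlier results. The only genuinely necessary (if routine) step is confirming the strict positivity of $K$ in the even case, where the sign of the correction term $\frac{2-3\delta_{p^\nu}}{12(p^\nu+1)}$ flips to negative; the clean factorization $K=\frac{p^\nu(p^\nu+2)}{12(p^\nu+1)}$ dispenses with this concern at once.
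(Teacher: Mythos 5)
Your proposal is correct and follows exactly the route the paper intends: the paper states this corollary as an immediate consequence of Theorem~\ref{thm1} and Corollary~\ref{b=0} without further argument, and your proof simply makes explicit the one point worth checking, namely that the coefficient $K=\frac{p^\nu+1}{12}+\frac{2-3\delta_{p^\nu}}{12(p^\nu+1)}=\frac{p^\nu(p^\nu+2)}{12(p^\nu+1)}$ (in the even case) is strictly positive, so that $\B K=0$ is equivalent to $\B=0$.
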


Let $\ell$ be a positive integer relatively prime  to $q$. Let $\ell=2^\be p_1^{e_1}\ldots p_k^{e_k}$ be the prime factorization of $\ell$ where $e_i\geq 1,~ \be\geq 0$ and $k\geq 0$.  Let $K^\prime:=\{i| p_i\not\in M_q \} $ and $K_1:=\{i| p_i\in M_q\}$. Then $K^\prime$ and $K_1$ form a partition of $\left\{1, \ldots, k\right\}$.
Let $d'(\ell)=\prod\limits_{i\in K'}p_i^{e_i}$ and $d_1(\ell)=\prod\limits_{i\in K_1}p_i^{e_i}$. For convenience, the empty product will be regards  as $ 1$. We therefore have  $\ell=2^{\be(\ell)} d'(\ell)d_1(\ell)$ and it is  called the $M_q$-\textit{factorization} of~$\ell$. 

From  Corollary~\ref{b=0},   if $r\not\in M_q$, then $B_{\nb,\lambda,q^2}=0$. Next, the expression of $\B$  when $r\in M_q$ is determined  using properties of set $M_q$ and the expression of $\chi$ in the following proposition.  By  Proposition~\ref{ell}, $r$ is of the form $r=2^{\be(r)}d_1(r)$. Since $r|(q+1)$, we have $2^{\be(r)}| (q+1)$.
\begin{prop}\label{bnbq}
  Let $\nb=2^{\be(\nb)}d'(\nb)d_1(\nb)$ and $r=2^{\be(r)}d_1(r)$ be  the $M_q$-factorizations of $\nb$ and $r$.  Then one of the following statements holds.
	\begin{enumerate}
		\item If  $r$ is even,~ $\be(\nb)+\be(r)\leq \gam$, and $r\in M_q$, then $B_{\nb,\lambda,q^2}^H=2^{\be(\nb)}d_1(\nb)$.
		\item If  $r$ is odd and  $r\in M_q$, then $\B=2^{\min\{\gam,~ \be{(\nb)}\}}d_1(\nb)$.
	\end{enumerate}
\end{prop}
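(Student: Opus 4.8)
The plan is to evaluate the closed form $\B=\sum_{j\in\chi\cap M_q}\phi(j)/\phi(r)$ from Lemma~\ref{BB} by inserting the explicit description of $\chi$ from Lemma~\ref{lem4.7} and then summing $\phi(j)$ through multiplicativity. I will lean on two facts already in place. First, $\mu$ and $\tau$ are coprime: any common prime factor would divide both $\nb$ and $r$, hence lie in $\gcd(\nb,r)$ and be one of the $p_i$, contradicting that $\mu,\tau$ are prime to each $p_i$. Second, since $r\in M_q$, every odd prime dividing $r$ lies in $M_q$, i.e. $d'(r)=1$ by Proposition~\ref{ell}; writing $N:=\prod_{i=1}^s p_i^{a_i}$ and noting that the $p_i$ divide $r$ (hence lie in $M_q$) while $N$ is coprime to $\mu$, the odd primes of $\nb$ split so that $d_1(\nb)=N\,d_1(\mu)$. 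This last identity is what turns the answer into the stated $d_1(\nb)$.

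The core step is to pin down $\chi\cap M_q$. Each $j\in\chi$ is a fixed block ($rN$ when $r$ is odd, $r2^{\be(\nb)}N$ when $r$ is even) times a free divisor $k$, and the coprimality of $\mu,\tau$ makes the block coprime to $k$, so $\phi$ factors across the product. To test $j=2^{\be(j)}\bar{j}\in M_q$ I apply Proposition~\ref{ell}: it demands $\be(j)\le\gam$ together with the requirement that every odd prime of $j$ belong to $M_q$. The primes of $r$ and the $p_i$ already satisfy this, so the only genuine constraints fall on $k$, namely that its odd part divide $d_1(\mu)$ and that the $2$-adic valuation of $j$ stay $\le\gam$. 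Once $\chi\cap M_q$ is described this way, I would finish with the divisor-sum identity $\sum_{d\mid m}\phi(d)=m$, the geometric sum $\sum_{\epsilon=0}^{t}\phi(2^\epsilon)=2^t$, and the ratio $\phi(rN)/\phi(r)=N$.

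The two cases diverge precisely at the power of $2$. When $r$ is odd, $k$ runs over divisors of $2^{\be(\nb)}\mu$, so $k=2^\epsilon k'$ contributes a variable $2^\epsilon$ to $j$; the $M_q$-cap $\epsilon\le\gam$ and the divisibility cap $\epsilon\le\be(\nb)$ combine to $0\le\epsilon\le\min\{\gam,\be(\nb)\}$ with $k'\mid d_1(\mu)$, and the summation collapses to $\B=N\,2^{\min\{\gam,\be(\nb)\}}d_1(\mu)=2^{\min\{\gam,\be(\nb)\}}d_1(\nb)$. When $r$ is even, $k\mid\mu$ is odd, so every $j\in\chi$ carries the fixed $2$-part $2^{\be(\nb)+\be(r)}$; here the hypothesis $\be(\nb)+\be(r)\le\gam$ is exactly what guarantees $\be(j)\le\gam$, membership in $M_q$ reduces to $k\mid d_1(\mu)$, and with the constant factor $\phi(2^{\be(\nb)+\be(r)})/\phi(2^{\be(r)})=2^{\be(\nb)}$ the sum gives $\B=2^{\be(\nb)}d_1(\nb)$.

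The principal obstacle is the bookkeeping of the $2$-adic valuation inside the $M_q$-membership test: one must correctly locate the sources of powers of $2$ in the two forms of $\chi$, select the right branch of Proposition~\ref{ell}, and separately check the degenerate configurations (such as $\mu=1$, $s=0$, or $r\in\{1,2\}$) in which $\bar{j}$ collapses and the small-value clauses of Proposition~\ref{ell} must be verified by hand. After that, the evaluation is routine multiplicative algebra.
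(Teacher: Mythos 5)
Your proposal is correct and follows essentially the same route as the paper: both evaluate $\B=\frac{1}{\phi(r)}\sum_{j\in\chi\cap M_q}\phi(j)$ from Lemma~\ref{BB} by substituting the explicit form of $\chi$ from Lemma~\ref{lem4.7}, using Proposition~\ref{ell} to reduce the membership test $j\in M_q$ to conditions on the free divisor $k$ (including the $2$-adic cap that produces $\min\{\gam,\be(\nb)\}$ in the odd case), and finishing with multiplicativity of $\phi$ and $\sum_{d\mid m}\phi(d)=m$. The only cosmetic difference is that the paper splits the even-$r$ case according to whether $\gcd(\nb,r)=1$, whereas you treat it uniformly; the computation is the same.
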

\begin{proof}
Assume that  $r$ is even, $\be(\nb)+\be(\gam)\leq\gam$, and $r\in M_q$.	We consider the following 2 cases.
	
\noindent \textbf{Case 1}  $\gcd(\nb,r)=1$. 

\noindent \textbf{Case 1.1}  $\be(\nb)=0$. Then $\nb=d'(\nb)d_1(\nb)$ and $r=2^{\be(r)}d_1(r)$. By Lemma~\ref{BB}, it can be concluded that
\begin{align*}
\B=\frac{1}{\phi(r)}\sum_{j\in\chi\cap M_q}\phi(j)=\frac{1}{\phi(r)}\sum_{k|\nb,~ kr\in M_q}\phi(kr)=\sum_{k| d_1(\nb)}\phi(k)=d_1(\nb).
\end{align*}

\noindent\textbf{Case 1.2}  $\be(\nb)\neq 0$. Then  $\nb=2^{\be(\nb)}d'(\nb)d_1(\nb)$ and $r=d_1(r)$. Hence, by Lemma~\ref{BB},  we have
\begin{align*}
\B=\frac{1}{\phi(r)}\sum_{j\in\chi\cap M_q}\phi(j)=\frac{1}{\phi(r)}\sum_{k|\nb,~  kr\in M_q}\phi(kr)=\sum_{k| 2^{\be(\nb)}d_1(\nb)}\phi(k)=2^{\be(\nb)}d_1(\nb).
\end{align*}

\noindent \textbf{Case 2} $\gcd(\nb,r)\neq 1$.   Recall that $\nb=2^{\be(\nb)}\prod_{i=1}^sp_i^{a_i}\mu$ and $r=2^{\be(r)}\prod_{i=1}^sp_i^{b_i}\tau$ are factorizations of $\nb$ and $r$ as in Equation \eqref{nbr}.  By Lemma~\ref{lem4.7}, we have 
\begin{align*}
\B&=\frac{1}{\phi(r)}\sum_{j\in\chi\cap M_q}\phi(j)=\frac{1}{\phi(r)}\sum_{k|\mu,~ k\in M_q}\phi\left(r2^{\be(\nb)}\prod_{i=1}^s p_i^{a_i}k\right) \\
&=\frac{1}{\phi(r)}\sum_{k|\mu,~k\in M_q}\phi\left(2^{\be(\nb)+\be(r)}\prod_{i=1}^s p_i^{a_i+b_i}\tau k\right)\\
&=\frac{1}{\phi(r)}\cdot\phi(2^{\be(\nb)+\be(r)})\cdot\phi(\prod_{i=1}^s p_i^{a_i+b_i})\cdot\phi(\tau)\cdot\sum_{k| \frac{d_1(\nb)}{\prod_{i=1}^{s} {p_i}^{a_i}}}\phi(k)\\
&=\frac{1}{\phi(r)}\cdot\frac{2^{\be(\nb)+\be(r)}}{2}\cdot\prod_{i=1}^s p_i^{a_i+b_i}\cdot\left(1-\frac{1}{p_i}\right)\cdot\phi(\tau)\cdot\frac{d_1(\nb)}{\prod_{i=1}^{s} {p_i}^{a_i}}\\
&=\frac{1}{\phi(r)}\cdot\frac{2^{\be(\nb)+\be(r)}}{2}\cdot\prod_{i=1}^s p_i^{b_i}\cdot\left(1-\frac{1}{p_i}\right)\cdot\phi(\tau)\cdot{d_1(\nb)}\\
&=\frac{1}{\phi(r)}\cdot2^{\be(\nb)}\cdot \phi\left(2^{\be(r)}\right)\cdot \phi\left(\prod_{i=1}^s p_i^{b_i}\right)\cdot\phi(\tau)\cdot{d_1(\nb)}\\
&=2^{\be(\nb)}d_1(\nb).
\end{align*}

Next, assume that $r$ is  odd and $r\in M_q$. Then  $r=\prod_{i=1}^sp_i^{b_i}\tau=d_1(r)$.  By Lemma~\ref{lem4.7},  we have
\begin{align*}
\B=&~~\frac{1}{\phi(r)}\sum_{j\in\chi\cap M_q}\phi(j)=\frac{1}{\phi(r)}\sum_{k| 2^{\be(\nb)}\mu,~ k\in M_q}\phi\left(r\prod_{i=1}^sp_i^{a_i}k\right) \\ 
=&~~\frac{1}{\phi(r)}\sum_{k| 2^{\be(\nb)}\mu,~ k\in M_q}\phi\left(\prod_{i=1}^sp_i^{a_i+b_i}\tau k\right)\\
=&~~\frac{1}{\phi(r)}\cdot \phi\left(\prod_{i=1}^sp_i^{a_i+b_i}\right)\cdot\phi(\tau)\cdot\sum_{k\,\big\vert\, \frac{\nb}{\prod_{i=1}^s{p_i^{a_i}}},~k\in M_q} \phi(k)\\
=&~~\frac{1}{\phi(r)}\cdot \phi\left(\prod_{i=1}^sp_i^{a_i+b_i}\right)\cdot\phi(\tau)\cdot\sum_{k\,\big\vert\, 2^{\min\{\be(\nb),\gam\}}\frac{d_1(\nb)}{\prod_{i=1}^s{p_i^{a_i}}}} \phi(k)\\
=&~~\frac{1}{\phi(r)}\cdot \phi\left(\prod_{i=1}^sp_i^{a_i+b_i}\right)\cdot\phi(\tau)\cdot 2^{\min\{\be(\nb), \gam\}}\cdot \frac{d_1(\nb)}{\prod_{i=1}^s{p_i^{a_i}}}\\
=&~~\frac{1}{\phi(r)}\cdot\prod_{i=1}^sp_i^{a_i}\cdot \phi\left(\prod_{i=1}^sp_i^{b_i}\right)\cdot\phi(\tau)\cdot 2^{\min\{\be(\nb), \gam\}}\cdot \frac{d_1(\nb)}{\prod_{i=1}^s{p_i^{a_i}}}\\
=&~~2^{\min\{\be(\nb), \gam\}}d_1(\nb)
\end{align*}
as desired.
\end{proof}
By Lemma~\ref{lem48}  and Proposition~\ref{bnbq}, we obtain the following corollary.
\begin{cor}\label{bnn}
Let $\gamma\geq 0$ be an integer such that $2^\gam||(q+1)$. Then	$B_{\nb,\lambda,q^2}=\nb$ if and only if $r,~\nb\in M_q$ and $\be(\nb)+\be(r)\leq \gam$.
\end{cor}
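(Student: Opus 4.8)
The plan is to reduce the whole statement to Proposition~\ref{bnbq}, which already evaluates $\B$ in the two nonzero regimes, together with the $M_q$-factorization $\nb=2^{\be(\nb)}d'(\nb)d_1(\nb)$. The bridge I need is the equivalence
\[
\nb\in M_q \iff d'(\nb)=1 \ \text{and}\ \be(\nb)\leq\gam, \qquad (\ast)
\]
after which both implications of the corollary drop out by comparing the closed forms for $\B$ against $\nb=2^{\be(\nb)}d'(\nb)d_1(\nb)$.

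First I would establish $(\ast)$. For the forward direction, if $\nb\in M_q$ then $\nb\mid q^i+1$ for some odd $i$; divisor closure of $M_q$ forces every (odd) prime dividing $\nb$ into $M_q$, so $d'(\nb)=1$, and since $2^{\be(\nb)}\mid q^i+1$ the $2$-adic valuation $v_2(q^i+1)=\gam$ (lifting-the-exponent for odd $q$, while for even $q$ one has $\gam=0$ and $\nb$ odd) yields $\be(\nb)\leq\gam$. For the converse I would argue directly with the Chinese Remainder Theorem. Because $\be(\nb)\leq\gam$ and $2^\gam\mid q+1=q^1+1$, one gets $2^{\be(\nb)}\mid q^i+1$ for every odd $i$. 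For each odd prime power $p^e\parallel d_1(\nb)$, membership $p\in M_q$ forces $2\parallel\ord_p(q)$ (Proposition~\ref{ell}(3b)), hence $\ord_{p^e}(q)=2m$ with $m$ odd, so $q^m\equiv-1\pmod{p^e}$. Taking $i=\lcm$ of these odd numbers $m$ (still odd, so each $i/m$ is odd and $q^i\equiv-1\pmod{p^e}$) gives $\nb\mid q^i+1$ with $i$ odd, i.e.\ $\nb\in M_q$.

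With $(\ast)$ in hand I would prove the corollary. Since $\nb\geq1$, the equality $\B=\nb$ forces $\B\neq0$, so Lemma~\ref{lem48} already delivers $r\in M_q$, together with $\be(\nb)+\be(r)\leq\gam$ when $r$ is even. If $r$ is even, Proposition~\ref{bnbq}(1) gives $\B=2^{\be(\nb)}d_1(\nb)$, and comparison with $\nb=2^{\be(\nb)}d'(\nb)d_1(\nb)$ shows $\B=\nb\iff d'(\nb)=1$; as $\be(r)\geq1$ the hypothesis already yields $\be(\nb)\leq\gam$, so $(\ast)$ upgrades $d'(\nb)=1$ to $\nb\in M_q$. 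If $r$ is odd, Proposition~\ref{bnbq}(2) gives $\B=2^{\min\{\gam,\be(\nb)\}}d_1(\nb)$; matching $2$-adic valuations and odd parts against $\nb$ shows $\B=\nb\iff\be(\nb)\leq\gam$ and $d'(\nb)=1$, and here $\be(r)=0$ so $\be(\nb)+\be(r)\leq\gam$ coincides with $\be(\nb)\leq\gam$, again giving $\nb\in M_q$ via $(\ast)$. For the converse of the corollary, assuming $r,\nb\in M_q$ and $\be(\nb)+\be(r)\leq\gam$, the forward half of $(\ast)$ gives $d'(\nb)=1$, and substituting into the two cases of Proposition~\ref{bnbq} returns $\B=\nb$.

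I expect the main obstacle to be the converse half of $(\ast)$: converting the purely local data ``every odd prime of $\nb$ lies in $M_q$ and $\be(\nb)\leq\gam$'' into a \emph{single} odd exponent $i$ with $\nb\mid q^i+1$. Every other step is bookkeeping against the already-proven closed forms, but assembling the simultaneous congruence requires the lifting-the-exponent control on $v_2(q^i+1)$ and the $p$-power lift of $\ord_p(q)$, which is precisely the content available from \cite[Corollary 3.7]{Jitman14} and its proof.
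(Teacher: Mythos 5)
Your proposal is correct and follows essentially the same route as the paper, which derives the corollary directly from Lemma~\ref{lem48} and Proposition~\ref{bnbq}; your auxiliary equivalence $(\ast)$ (that $\nb\in M_q$ iff $d'(\nb)=1$ and $\be(\nb)\leq\gam$) is exactly the bridge the paper leaves implicit, and your verification of it via Proposition~\ref{ell} and the $2$-adic/odd-prime lifting argument is sound. The only detail worth noting is that in the even-$r$ case you correctly observe $\be(r)\geq 1$ so that $\be(\nb)+\be(r)\leq\gam$ forces $\be(\nb)\leq\gam$, which is needed before invoking $(\ast)$.
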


\section{Bounds on $
    E_H(n,\lambda, q^2)$}

In this section, we focus on upper and lower bounds of $E_H(n,\lambda, q^2)$.  Based on this bounds, it can be concluded  that  either $E_H(n, \lambda, q^2) $ is $0$ or it  grows at the same rate with $n$ as $n$ tends to infinity.

\begin{thm}\label{thm:last}
	Let $q$ be a prime power and $2^\gam||(q+1)$. Let $n=\nb p^\nu$, where $p\nmid\nb$ and $\nu\geq 0$. Let $\lambda\in\Fqq^*$ be such that the multiplicative order of $\lambda$ is $r$ and $r| (q+1)$. Then one of the following statements holds.
	\begin{enumerate}
		\item $E_H(n,\lambda,q^2)=0$ if and only if $\be(\nb)+\be(r)\leq\gam$ and $r, n\in M_q$.
		\item If  $r$ is  even, then   one of the following statements holds.
        \begin{enumerate}
            \item  $\frac{n}{6}\leq E_H(n,\lambda,q^2)<\frac{n}{3}$  if $\be(\nb)+\be(r)\leq\gam$, $r\in M_q$ and $n\not\in M_q$.
            \item   $\frac{n}{4}\leq E_H(n,\lambda,q^2)<\frac{n}{3}$ if $\be(\nb)+\be(r)>\gam$ or $r\not\in M_q$.
        \end{enumerate} 
\item If  $r$ is  odd,  then   one of the following statements holds.
\begin{enumerate}
    \item  $\frac{n}{8}\leq E_H(n,\lambda,q^2)<\frac{n}{3}$ if  $r\in M_q$ and $n\not\in M_q$.
    \item   $\frac{n}{4}\leq E_H(n,\lambda,q^2)<\frac{n}{3}$ if $r\not\in M_q$.
    \end{enumerate}
		\end{enumerate}
\end{thm}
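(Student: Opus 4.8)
The plan is to collapse the entire statement onto the single formula of Theorem~\ref{thm1}, rewritten so that its two sources of positivity become visible. Setting $P:=p^\nu$ and $n=\nb P$, I would regroup that formula as
\[
E_H(n,\lambda,q^2)=\nb\bigl(Pc_n-c_B\bigr)+(\nb-\B)\,c_B,
\]
where $c_n$ and $c_B$ denote the coefficients of $n$ and of $\B$ in Theorem~\ref{thm1}. A short computation gives $Pc_n-c_B=\tfrac{P-1}{4}$ when $P$ is odd and $Pc_n-c_B=\tfrac{P^2}{4(P+1)}$ when $P$ is even; in particular $Pc_n-c_B\ge 0$, with equality exactly when $P=1$, while $c_B>0$ and $0\le\B\le\nb$. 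Hence both summands are nonnegative, and this is the structural fact driving everything.

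This already yields Part~1. The two summands vanish simultaneously iff $P=1$ (i.e.\ $\nu=0$, so $n=\nb$) and $\B=\nb$; Corollary~\ref{bnn} rephrases the latter as $r,\nb\in M_q$ and $\be(\nb)+\be(r)\le\gam$. Since $\gcd(q^i+1,q)=1$, membership $n\in M_q$ forces $p\nmid n$, hence $\nu=0$ and $n=\nb$, which lets me interchange "$\nb\in M_q$" and "$n\in M_q$" and recover the stated biconditional. The strict upper bound $E_H<\tfrac n3$ in Parts~2 and~3 is exactly Corollary~\ref{coreq}(1), so only the lower bounds remain.

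For the "$\B=0$" alternatives (cases $2(b)$, $3(b)$) Corollary~\ref{b=0} shows $\B=0$ precisely under the stated hypotheses, whence $E_H=nc_n$; since $c_n=\tfrac{2P+1}{6(P+1)}$ increases in $P$ from $\tfrac14$ at $P=1$ towards $\tfrac13$, this gives $\tfrac n4\le E_H<\tfrac n3$. For the remaining cases $2(a)$, $3(a)$ I split on $\nu$. When $\nu\ge 1$ I drop the nonnegative second summand and use $E_H\ge\nb(Pc_n-c_B)$: in case $2(a)$, $r$ even makes $q$ and hence $p$ odd, so $P\ge 3$ and $\tfrac{P-1}{4}\ge\tfrac P6$, giving $E_H\ge\tfrac n6$; in case $3(a)$ the checks $\tfrac{P-1}{4}\ge\tfrac P8$ ($P$ odd) and $\tfrac{P^2}{4(P+1)}\ge\tfrac P8$ ($P$ even) give $E_H\ge\tfrac n8$ for all $P\ge 2$. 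When $\nu=0$ we have $n=\nb$ and $E_H=\tfrac{\nb-\B}{4}$ by Corollary~\ref{coreq}(2), so everything reduces to an upper bound on $\B/\nb$, which Proposition~\ref{bnbq} evaluates as $1/d'(\nb)$ (case $2(a)$, and case $3(a)$ with $\be(\nb)\le\gam$) or $1/\bigl(2^{\be(\nb)-\gam}d'(\nb)\bigr)$ (case $3(a)$ with $\be(\nb)>\gam$).

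The main obstacle is exactly this last estimate: that $n=\nb\notin M_q$ forces $\B/\nb\le\tfrac13$ in case $2(a)$ and $\B/\nb\le\tfrac12$ in case $3(a)$. The subtlety is that $\nb$ can fail to lie in $M_q$ either through an odd prime divisor outside $M_q$ (which makes $d'(\nb)\ge 3$) or merely through too large a $2$-adic valuation, and only the former improves the ratio. To rule out the spurious case I need the converse direction of Proposition~\ref{ell}, namely that $d'(\nb)=1$ together with $\be(\nb)\le\gam$ already forces $\nb=2^{\be(\nb)}d_1(\nb)\in M_q$, which I would extract from \cite[Corollary 3.7]{Jitman14} underlying Proposition~\ref{ell}. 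Granting this, $\nb\notin M_q$ with $\be(\nb)\le\gam$ yields $d'(\nb)\ge 3$ and hence $\B/\nb=1/d'(\nb)\le\tfrac13$; in case $2(a)$ the hypothesis $\be(\nb)\le\gam$ is automatic from $\be(\nb)+\be(r)\le\gam$ (as $\be(r)\ge1$), while in case $3(a)$ the complementary subcase $\be(\nb)>\gam$ needs no such input, since $2^{\be(\nb)-\gam}\ge 2$ alone gives $\B/\nb\le\tfrac12$. Substituting into $E_H=\tfrac{\nb-\B}{4}$ delivers $E_H\ge\tfrac n6$ in case $2(a)$ and $E_H\ge\tfrac n8$ in case $3(a)$, completing the lower bounds; the extremal choices $d'(\nb)=3$ and $d'(\nb)=1,\ \be(\nb)=\gam+1$ show that both are sharp.
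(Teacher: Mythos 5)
Your proof is correct and follows essentially the same route as the paper's: the decomposition $E_H=\nb(Pc_n-c_B)+(\nb-\B)c_B$ is an algebraic repackaging of the identity the paper writes between $E_H(n,\lambda,q^2)/n$ and $E_H(\nb,\lambda,q^2)/n$, and the overall case split ($\nu\ge 1$ by dropping the nonnegative $(\nb-\B)$-term, $\nu=0$ via Corollary~\ref{coreq}(2) and Proposition~\ref{bnbq}, and Corollary~\ref{cor-noB} for the $\B=0$ alternatives) matches the paper's proof step for step. The one place you go beyond the paper is in flagging that the bound $d'(\nb)\ge 3$ in case 2(a) and the exhaustiveness of the two subcases in 3(a) rest on the converse of Proposition~\ref{ell} (that $d'(\nb)=1$ together with $\be(\nb)\le\gam$ forces $\nb\in M_q$), a fact the paper uses silently; appealing to \cite[Corollary 3.7]{Jitman14} for it is the correct way to close that step.
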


\begin{proof}
From Equation \eqref{EH},  observe that  $E_H(n,q^2)=0$ if and only if 
$$\frac{B_{{\nb,\lambda,q^2}}^H}{\nb}=\frac{4p^{2\nu}+2p^\nu}{p^{2\nu}+2p^\nu+3-3\delta_{p^\nu}}.$$
By Lemma~\ref{BB}, it is not difficult to see that $\frac{B_{{\nb,\lambda,q^2}}^H}{\nb}\leq 1$  and  $\frac{B_{{\nb,\lambda,q^2}}^H}{\nb}=1$ if and only if $r,~ \nb\in M_q$ and $\be(\nb)+\be(r)\leq\gam$ by Corollary~\ref{bnn}.  On the other hand,  we have 
$\frac{4p^{2\nu}+2p^\nu}{p^{2\nu}+2p^\nu+3-3\delta_{p^\nu}}\geq 1$ and  $\frac{4p^{2\nu}+2p^\nu}{p^{2\nu}+2p^\nu+3-3\delta_{p^\nu}}=1$ if and only if $p^{\nu}=1$.
Therefore,  $\frac{B_{{\nb,\lambda,q^2}}^H}{\nb}=\frac{4p^{2\nu}+2p^\nu}{p^{2\nu}+2p^\nu+3-3\delta_{p^\nu}}$ if and only if  they are $1$.  We conclude that $\frac{B_{{\nb,\lambda,q^2}}^H}{\nb}=\frac{4p^{2\nu}+2p^\nu}{p^{2\nu}+2p^\nu+3-3\delta_{p^\nu}}$ if and only if $r, \nb\in M_q$,~ $\be(\nb)+\be(r)\leq\gam$ and $p^{\nu}=1$. Equivalently,  $r, n\in M_q$ and $\be(\nb)+\be(r)\leq\gam$.
Therefore, Statement $1)$ is proved.	

In any cases, we have the upper bound $E_H(n,\lambda,q^2)<\frac{n}{3}$  from Corollary~\ref{coreq}.   Therefore, it remains to prove onlu the lower bounds.

To prove $2)$, assume that  $r$ is  even.  

$(a)$: Assume  that  $\be(\nb)+\be(r)\leq\gam$ , $r\in M_q$, and $n\not\in M_q$.

\noindent \textbf{Case 1} $\gcd(n,q)\neq 1$.  
	By  Corollary \ref{coreq}, we have $E_H(\nb,\lambda, q^2)=\frac{\nb-B_{{\nb,\lambda, q^2}}^H}{4}$. Hence, by Equation \eqref{EH}, $E_H(n, \lambda, q^2)$ can be  expressed as 
\begin{align}\notag
	\frac{E_H(n, \lambda, q^2)}{n}=&\frac{1}{4}-\frac{1}{4p^\nu}+\frac{\delta_{p^\nu}}{4p^\nu(p^\nu+1)}+\frac{E_H(\nb, \lambda, q^2)}{n}\left(\frac{p^\nu+1}{3}+\frac{2-3\delta_{p^\nu}}{3(p^\nu+1)}\right)\\\notag
	\geq&\frac{1}{4}-\frac{1}{4p^\nu}+\frac{\delta_{p^\nu}}{4p^\nu(p^\nu+1)}.
\end{align}
It follows that $\frac{E_H(n, q^2)}{n}\geq\frac{1}{6}$ for all $p^\nu\geq 2$. Hence, ${E_H(n, q^2)}\geq\frac{n}{6}$.
\noindent \textbf{Case 2} $\gcd(n,q)=1$. Then $n=\nb=2^{\be(\nb)}d'(\nb)d_1(\nb)$. By Proposition~\ref{bnbq}, we have $\B=2^{\be(\nb)}d_1(\nb)$. It follows that 
\begin{align*}
\frac{\B}{n}=\frac{2^{\be(\nb)}d_1(\nb)}{2^{\be(\nb)}d'(\nb)d_1(\nb)}=\frac{1}{d'(\nb)},
\end{align*}
and hence,  $\B=\frac{n}{d'(\nb)}$. Therefore, 
\begin{align*}
E_H(n,\lambda, q^2)=\frac{n-\B}{4}=\frac{1}{4}\left(n-\frac{n}{d'(\nb)}\right)=\frac{n}{4}\left(1-\frac{1}{d'(\nb)}\right)\geq \frac{\nb}{4}\left(1-\frac{1}{3}\right)=\frac{n}{6}
\end{align*}
Altogether, it can be concluded  that $\frac{n}{6}\leq E_H(n,\lambda,q^2)$. 

$(b)$: Assume that  $\be(\nb)+\be(r)>\gam$ or $r\not\in M_q$. By Corollary \ref{cor-noB}, we have  \[E_H(n,\lambda,q^2)=n\left(\frac{1}{3}-\frac{1}{6(p^\nu+1)}\right)\geq n\left(\frac{1}{3}-\frac{1}{6(1+1)}\right)=\frac{n}{4}.\]

To prove $3)$, assume that  $r$ is odd.

$(a)$: Assume  that  $r\in M_q$ and $n\not\in M_q$. 

\noindent \textbf{Case 1} $\gcd(n,q)\neq 1$. 
Similar to the prove of Case 1 in 2), we have ${E_H(n, q^2)}\geq\frac{n}{6}$.

\noindent \textbf{Case 2} $\gcd(n,q)=1$. 
Then $\B=2^{\min\{\gam,~ \be(\nb)\}}d_1(\nb)$  by Proposition~\ref{bnbq}. Thus
\begin{align*}
\frac{\B}{n}=\frac{2^{\min\{\gam,~ \be(\nb)\}}d_1(\nb)}{2^{\be(\nb)}d'(\nb)d_1(\nb)}=\frac{1}{2^{\be(\nb)-\min\{\gam,~ \be(\nb)\}}d'(\nb)}
\end{align*}
Since $n\not\in M_q$,  we  consider  the  following   $2$ cases.  

\noindent \textbf{Case 2.1} $d'(\nb)=1$ and  $\be(\nb)>\gam$.  We have $\frac{B_{{\nb,\lambda,q^2}}^H}{n}=\frac{1}{2^{\be(\nb)-\gam}}\leq \frac{1}{2}$.

\noindent \textbf{Case 2.2} $d'(\nb)>1$.   Then  \[\frac{B_{{\nb,\lambda,q^2}}^H}{n}=\frac{1}{2^{\be(\nb)-\min\{\gam,~\be(\nb)\}}d'(\nb)}\leq \frac{1}{d'(\nb)}\leq \frac{1}{3}.\]
From Cases 2.1 and 2.2, we get $B_{\nb, q^2}^H\leq \frac{n}{2}$ which implies that $$E_H(n, \lambda, q^2)=\frac{n-B_{{\nb,\lambda,q^2}}^H}{4}\geq \frac{n-\frac{n}{2}}{4}=\frac{n}{8}.$$
Consequently, we have $\frac{n}{8}\leq E_H(n, \lambda, q^2)$.

$(b)$: Assume that  $r\not\in M_q$. By Corollary \ref{cor-noB}, we have  \[E_H(n,\lambda,q^2)=n\left(\frac{1}{3}-\frac{1}{6(p^\nu+1)}\right)\geq n\left(\frac{1}{3}-\frac{1}{6(1+1)}\right)=\frac{n}{4}.\]

The proof is completed.\end{proof}

\begin{rem} \label{rem:last} Assume the notations as in Theorem~\ref{thm:last}.  If $n\in M_q$, then $\nb \in M_q$ by  Proposition \ref{ell} which implies that  $\be(\nb) \leq\gam$.    In the case where  $r$ is odd, we have   $\be(r)=0$ which implies that  $\be(\nb)+\be(r)=\be(\nb)\leq\gam$. 
  Hence,  if  $r$ is odd,  it can be concluded that  $E_H(n,\lambda,q^2)=0$ if and only if  $r, n\in M_q$.
\end{rem}

From Theorem~\ref{thm:last}, it can be concluded that either  $E_H(n, \lambda, q^2) =0$ or it  grows at the same rate with $n$ as $n$ tends to infinity.

 \section{Conclusion and Remarks}
  For an element $\lambda$ in $\Fqq$ of order $r$ such that $r|(q+1)$, a general formula for the average dimension $E_H(n, \lambda, q^2) $ of the hull of $\lambda$-constacyclic codes has been given in Theorem \ref{thm1} and as well as its lower and upper bounds in Theorem  \ref{thm:last}.
  Asymptotically,   either the  average dimension of the Hermitian hull of constacyclic  codes is    zero or it grows at the same rate with $n$. 
  
  Let   $E(n,  q)$   denote the average dimension of the Euclidean hull of cyclic codes of length $n$ over $\Fq$ and let $N_q:=\{\ell\geq 1| \ell \text{~divides~} q^i+1 \text{~for some positive integer~} i\}$. In {\cite[Theorem 25]{Skersys03}}, it has been shown that  $E(n, q)=0$   if and only if  $n\in N_q$, and $\frac{n}{12}\leq E(n,  q)<\frac{n}{3}$ otherwise.  This means that  either  the  average dimension of the Euclidean  hull of cyclic  codes is    zero or  it grows at the same rate with $n$. This result coincides  the result for Hermitian case in this paper.  However,   there are interesting  difference on  lower bounds  explained in Table \ref{T1}.



  
  \begin{table}[!hbt]
      \centering
      \begin{tabular}{|c|c|c|c|l|}
         \hline Order of $\lambda$ &$n$ &LB&UB& Remarks\\
         \hline
 \multirow{ 2}{*}{$r$ is odd}& $r\in M_q$ and $ n\in M_q$ &$0$&$0$&   Remark \ref{rem:last} \\\cline{2-5}
 \multirow{ 2}{*}{and $r|(q+1).$}& $r\in M_q$ and $ n\not\in M_q$&$\dfrac{n}{8}$&$\dfrac{n}{3}$&  \multirow{ 2}{*}{Theorem~\ref{thm:last} }\\\cline{2-4}
& $r\not\in M_q$&$\dfrac{n}{4}$&$\dfrac{n}{3}$&\\ \cline{1-5}
\multirow{ 5}{*}{$r$ is even}& $\be(\nb)+\be(r)\leq\gam$, &$0$&$0$& \multirow{ 6}{*}{Theorem~\ref{thm:last}}\\     
 & $r\in M_q$ and $n\in M_q$&&&\\   \cline{2-4}  
 & $\be(\nb)+\be(r)\leq\gam$, &$\dfrac{n}{6}$&$\dfrac{n}{3}$&  \\
\multirow{ 1}{*}{and $r|(q+1).$}&$r\in M_q$ and  $n\not\in M_q$&&&\\ \cline{2-4}
& $\be(\nb)+\be(r)>\gam$ &$\dfrac{n}{4}$&$\dfrac{n}{3}$&\\
&or $r\not\in M_q$&&&\\
         \hline
         
      \end{tabular}
  \caption{The lower and upper bounds for  $E_H(n, \lambda, q^2)$}
  \label{T1}
  \end{table}

 From the summary in Table \ref{T1}, it would be  interesting to study an improvement on the lower and upper bounds  for  the average dimension of the hull of constacyclic codes with some restricted lengths.
 
\section*{References}

\end{document}